\title{Faster Algorithms for the Maximum Common Subtree Isomorphism Problem%
\footnote{This work was supported by the German Research Foundation (DFG), priority programme ``Algorithms for Big Data'' (SPP 1736).}}
\author[1]{Andre Droschinsky}
\author[1]{Nils M. Kriege}
\author[1]{Petra Mutzel}
\affil[1]{Dept.\ of Computer Science,
Technische Universit{\"a}t Dortmund, Germany\\
  \texttt{\{andre.droschinsky,nils.kriege,petra.mutzel\}@tu-dortmund.de}}
\authorrunning{A. Droschinsky and N.\,M. Kriege and P. Mutzel}
\subjclass{F.2.2 Nonnumerical Algorithms and Problems}
\keywords{MCS, maximum common subtree, enumeration algorithms, maximum weight bipartite matchings}%
\newcommand{\wMatch}[0]{\ensuremath{w}\xspace}
\newcommand{\WMatch}[0]{\ensuremath{W}\xspace}
\def\MWM{MWM\xspace}
\def\MWPM{MWPM\xspace}
\def\MWPMs{MWPMs\xspace}
\DeclareMathOperator{\dom}{dom}
\newcommand{\wIso}[0]{\ensuremath{\omega}\xspace}
\newcommand{\WIso}[0]{\ensuremath{\mathcal{W}}\xspace}
\def\MCSI{MCSI\xspace}
\def\MWMs{MWMs\xspace}
\def\MCSR{\textsc{MCS}_\text{root}}
\def\MCSIs{MCSIs\xspace}
\theoremstyle{plain}
\newtheorem{proposition}[theorem]{Proposition}
\begin{document}

\maketitle

\begin{abstract}
The maximum common subtree isomorphism problem asks for the largest possible 
isomorphism between subtrees of two given input trees.
This problem is a natural restriction of the maximum common subgraph problem, which is ${\sf NP}$-hard in general graphs.
Confining to trees renders polynomial time algorithms possible and is of 
fundamental importance for approaches on more general graph classes.
Various variants of this problem in trees have been intensively studied.
We consider the general case, where trees are neither rooted nor ordered and the 
isomorphism is maximum w.r.t.\ a weight function on the mapped vertices and edges.
For trees of order $n$ and maximum degree $\Delta$ our algorithm achieves a running time of $\mathcal{O}(n^2\Delta)$ by exploiting the structure of the 
matching instances arising as subproblems.
Thus our algorithm outperforms the best previously known approaches.
No faster algorithm is possible for trees of bounded degree and
for trees of unbounded degree we show that a further reduction of the running time 
would directly improve the best known approach to the assignment problem. 
Combining a polynomial-delay algorithm for the enumeration 
of all maximum common subtree isomorphisms with central ideas of our new algorithm
leads to an improvement of its running time from $\mathcal{O}(n^6+Tn^2)$ to $\mathcal{O}(n^3+Tn\Delta)$, where $n$ is the order of the larger tree, $T$ is the number of different solutions, and $\Delta$ is the minimum of the maximum degrees of the input trees.
Our theoretical results are supplemented by an experimental evaluation on synthetic
and real-world instances.
 \end{abstract}

\section{Introduction}
The maximum common subgraph isomorphism problem (MCS) asks for an isomorphism 
between induced subgraphs of two given graphs that is of maximum weight
w.r.t.\ a weight function on the mapped vertices and edges.
The problem is of fundamental importance in applications like pattern 
recognition~\cite{Conte2004} or bio- and cheminformatics~\cite{Ehrlich2011,Schietgat2013}.
MCS naturally generalizes the subgraph isomorphism problem 
(SI), where the task is to decide if one graph is isomorphic to a subgraph of 
another graph. Both problems are known to be ${\sf NP}$-hard for general graphs.

It is not astonishing that these problems have been extensively studied for restricted graph classes.
Polynomial time algorithms for SI and MCS in trees have been pioneered by Edmonds 
and Matula in the 1960s. They rely on solving a series of maximum bipartite 
matching instances, see~\cite{Matula1978}.
These early results focused on the polynomial time complexity of the problem; 
since then considerable progress has been achieved in improving the running time 
of SI algorithms (also see \cite{Abboud2016} and references therein): 
Reyner \cite{Reyner1977,Verma1989} and Matula \cite{Matula1978} both showed a 
running time of $\mathcal{O}(n^{2.5})$ for rooted trees. Chung~\cite{Chung1987} 
later obtained the same bound for unrooted trees.
Further improvements were made by Shamir and Tsur~\cite{Shamir1999} who obtained 
time $\mathcal{O}(n^{2.5}/\log n)$ and $\mathcal{O}(n^\omega)$, where $\omega$ is 
the exponent of matrix multiplication.

MCS on trees seems to be harder. For two rooted trees of size $n$, 
it is known that the problem can be solved, roughly speaking, in the same time 
as the associated maximum weight matching problem in a bipartite graph on $n$ 
vertices:
Gupta and Nishimura~\cite{Gupta1998} presented an $\mathcal{O}(n^{2.5} \log n)$ 
algorithm for MCS in rooted trees by assuming weights to be in $\mathcal O(n)$, which 
allows to employ a scaling approach to the matching problem~\cite{GT1989}.
The running time can be improved to $\mathcal{O}(\sqrt{\Delta}n^2\log{\frac{2n}{\Delta}})$, 
where $\Delta$ denotes the maximum degree~\cite{Kao2001}. 
Allowing a real weight function to determine the similarity of mapped vertices
gives rise to bipartite matching instances with unrestricted weights.
Solving these with the Hungarian method leads to cubic running time, see 
e.g.~\cite{Valiente2002}. Since the the size of the matching instances is 
bounded by the maximum degree $\Delta$, the result can be improved to 
$\mathcal{O}(n^2\Delta)$ time~\cite{Torsello2005}.
Various related concepts for the comparison of rooted trees, either ordered or 
unordered, have been proposed and were studied in detail, see \cite{Valiente2002}
and references therein, where the \emph{tree edit distance} is a prominent 
example~\cite{Akutsu2013a,Demaine2009}.

In this article we consider the problem of finding a common subtree isomorphism 
in unrooted, unordered trees that is maximum w.r.t.\ a weight function on the 
mapped vertices and edges.
This problem is directly relevant in various applications, where real-world objects 
like molecules or shapes are represented by (attributed) trees~\cite{FTrees,Torsello2005}.
Moreover, it forms the basis for several recent approaches to solve MCS in more 
general graph classes, see~\cite{Akutsu2013,Kriege2014a,Kriege2014b,Schietgat2013}.
Methods directly based on algorithms for rooted trees result in time 
$\mathcal{O}(n^4\Delta)$ by considering all pairs of possible roots. 
An improvement to $\mathcal{O}(n^3\Delta)$ has been reported in \cite{Torsello2005},
which is limited to non-negative weight functions.
Schietgat, Ramon and Bruynooghe~\cite{Schietgat2013} suggested an approach for 
MCS in outerplanar graphs, which solves the considered problem when applied to 
trees. The approach is stated to have a running time of $\mathcal{O}(n^{2.5})$,
but in fact leads to a running time of $\Omega(n^4)$ in the worst case.%
\footnote{The analysis of the algorithm appears to be flawed. An Erratum to 
\cite{Schietgat2013} has been submitted to the \emph{Annals of Mathematics and 
Artificial Intelligence}, see Appendix.
Our experimental study of their implementation actually suggests a time bound of $\Omega(n^5)$.}

\subparagraph{Our contribution.}
We show that for arbitrary weights a maximum common subtree isomorphism between two trees $G$ and $H$ of order $n$ with $\Delta(G)\leq \Delta(H)$ can be computed in time $\mathcal{O}(n^2 (\Delta(G)+\log \Delta(H))$.
We obtain the improvement
by (i) considering only a specific subset of subproblems that we show to be 
sufficient to guarantee an optimal solution; (ii) exploiting the close relation 
between the emerging matching instances.
We show that for general trees any further improvement of this time bound would allow to solve the
assignment problem in $o(n^3)$, and hence improve over the best known approach
to this famous problem for more than 30 years. 
For trees of bounded degree the running time bound of $\mathcal O(n^2)$ is tight.
We apply our new techniques to the problem of enumerating all maximum common subtree isomorphisms, thus improving the state-of-the-art running times. 
Finally, we present an experimental
evaluation on synthetic and real-world instances showing that our new algorithm is faster than existing approaches.

\section{Preliminaries}\label{34_sec:Preliminaries}
In this paper, $G=(V,E)$ is a \emph{simple undirected graph}.
We call $v\in V$ a \emph{vertex} and  $uv=vu\in E$ an \emph{edge} of $G$.
For a graph $G=(V,E)$ we define $V(G):=V_G:=V$, $E(G):=E_G:=E$ and $|G|:=|V(G)|$.
For a subset of vertices $V'\subseteq V$ the graph $G[V']:=(V',E'),\ E':=\{uv\in E\mid u,v\in V'\}$, is called \emph{induced} subgraph.
A connected graph with a unique path between any two vertices is a \emph{tree}.
A tree $G$ with an explicit root vertex $r\in V_G$ is called \emph{rooted} tree, denoted by $G^r$.
In a rooted tree $G^r$ we denote the children of a vertex $v$ by $C(v)$
and its parent by $p(v)$, where $p(r) = r$.
The \emph{depth} ${\rm depth}(v)$ of a vertex $v$ is the number of edges on the path from 
$v$ to $r$.
The \emph{neighbors} of a vertex $v$ are defined as $N(v):=\{u\in V_G\mid vu\in E_G\}$.
The \emph{degree} of a vertex $v\in V_G$ is $\delta(v):=|N(v)|$, the \emph{degree} $\Delta(G)$ of a graph $G$ is the maximum degree of its vertices.

For a graph $G=(V,E)$ a \emph{matching} $M\subseteq E$ is a set of edges, such that no two edges share a vertex.
A matching $M$ of $G$ is said to be \emph{perfect}, if $2|M|=|V|$.
A \emph{weighted graph} is a graph endowed with a function $\wMatch:E\to\mathbb R$.
The weight of a matching $M$ in a weighted graph is $\WMatch(M):=\sum_{e\in M} \wMatch(e)$.
We call a matching $M$ of a weighted bipartite graph $G$ a \emph{maximum weight matching} (\MWM) if there is no other matching $M'$ of $G$ with $\WMatch(M')>\WMatch(M)$.
The \emph{assignment problem} asks for a matching with maximum weight among all
perfect matchings and we refer to a solution by \MWPM.

An \emph{isomorphism} between two graphs $G$ and $H$ is a bijective function $\phi:V_G\to V_H$ such that $uv\in E_G\Leftrightarrow\phi(u)\phi(v)\in E_H$;
if such an isomorphism exists, $G$ and $H$ are said to be \emph{isomorphic}.
We call a graph $G$ \emph{subgraph isomorphic} to a graph $H$, if there is an induced subgraph $H'\subseteq H$ isomorphic to $G$.
In this case, we write  $G\preceq_\phi H$, where $\phi:V_G\to V_{H'}$ is an isomorphism between $G$ and $H'$.
A \emph{common subgraph} of $G$ and $H$ is a graph $I$, such that $I\preceq_\phi G$ and $I\preceq_{\phi'} H$.
The isomorphism $\varphi:=\phi'\circ\phi^{-1}$ is called \emph{common subgraph isomorphism} (CSI).
For a function $f:X\to Y$ let $\dom(f):=X$ be the domain of $f$.
If there is no other CSI $\varphi'$ with $|\dom(\varphi')|>|\dom(\varphi)|$, we 
call $\varphi$ \emph{maximum} common subgraph isomorphism (MCSI).

We generalize the above definitions to a pair of graphs $G,H$ under a commutative weight-function $\wIso:(V_G\times V_H)\cup (E_G\times E_H)\to \mathbb{R}\cup \{-\infty\}$.
The weight $\WIso(\phi)$ of an isomorphism $\phi$ between $G$ and $H$ under $\wIso$ is the sum of the weights $\wIso(v,\phi(v))$ and $\wIso(vu,\phi(v)\phi(u))$ of all vertices and edges mapped by $\phi$.
A maximum common subgraph isomorphism $\phi$ under a weight function is one of maximum weight $\WIso(\phi)$ instead of maximum size $|\dom(\phi)|$.
Note, this is less restrictive than a common approach for isomorphisms on labeled graphs, where the labels must match.
By defining $\wIso$ such that mapped vertices and mapped edges add 1 and 0, respectively, to the weight, we obtain an isomorphism of maximum size.
Therefore, in the following we consider graphs under a weight function unless stated otherwise and refer to the corresponding solution as \MCSI. 
For convenience we replace the word \emph{graph} by \emph{tree} in the above
definitions when appropriate. 
The \emph{maximum common subtree isomorphism problem} (MCST) is to determine the
weight of an \MCSI, where the input graphs and the common subgraph are trees.
We further define $[1..k]:=\{1,\ldots,k\}$ for $k\in \mathbb N$.

\section{Problem Decomposition and Fundamental Algorithms}\label{34_sec:FDPF}
We introduce the basic techniques for solving MCST
following the ideas of Edmonds and Matula~\cite{Matula1978}. 
The approach requires to compute \MWMs in bipartite graphs as a subroutine. We
discuss the occurring matching instances in detail in Section~\ref{34_sec:MWM}.

By fixing the roots of both trees we can develop an algorithm solving MCST on this restricted setting. It is easy to generalize this solution by considering all possible pairs of roots. We then show that it is sufficient to fix the root of one tree while still obtaining a maximum solution.

\subparagraph{Rooted trees.}

\tikzstyle{vertex_34}=[circle, draw=black, fill=gray!80!white,semithick,scale=.8]
\tikzstyle{rvertex_34}=[circle, draw=black, fill=black,semithick,scale=.8]
\tikzstyle{svertex_34}=[circle, draw=gray!80!white, fill=gray!20!white,semithick,scale=.8]
\tikzstyle{edge_34}=[draw=black, thick,-]
\tikzstyle{sedge_34}=[draw=black!30!white, thick,-]
\begin{figure}
   \centering
   \begin{subfigure}[b]{0.25\textwidth}
\begin{tikzpicture}[scale=0.6, auto]

    \node[svertex_34, label ={below:}] (u1) at (-3.5,-2) {};
    \node[svertex_34, label ={below:}] (u2) at (-2.5,-2) {};
    \node[svertex_34, label ={below:}] (u3) at (-1.5,-2) {};
    \node[svertex_34, label ={left:$r$}] (u4) at (-2.5,-1) {};
    \node[rvertex_34, label ={right:$u$}] (u5) at (-1,0) {};
    \node[vertex_34, label ={right:$c_1$}] (u6) at (-1,-1) {};
    \node[vertex_34, label ={right:$c_2$}] (u7) at (0.5,-1) {};
    \node[vertex_34, label ={right:}] (u8) at (0.5,-2) {};
    \node[vertex_34, label ={right:}] (u9) at (-0.3,-3) {};
    \node[vertex_34, label ={right:}] (u10) at (1.3,-3) {};

	\draw[sedge_34] (u1) -- (u4);
	\draw[sedge_34] (u2) -- (u4);
	\draw[sedge_34] (u3) -- (u4);
	\draw[sedge_34] (u4) -- (u5);
	\draw[edge_34] (u5) -- (u6);
	\draw[edge_34] (u5) -- (u7);
	\draw[edge_34] (u7) -- (u8);
	\draw[edge_34] (u8) -- (u9);
	\draw[edge_34] (u8) -- (u10);
\end{tikzpicture}
   \subcaption{Rooted subtree $G^r_u$}
   \label{34_example:mcs-tree:graph1}
   \end{subfigure}
   \begin{subfigure}[b]{0.24\textwidth}
\begin{tikzpicture}[scale=0.6, auto]

    \node[vertex_34, label ={below:$d_1$}] (u1) at (-0.5,-1) {};
    \node[vertex_34, label ={below:$d_2$}] (u2) at (0.5,-1) {};
    \node[svertex_34, label ={below:$s$}] (u3) at (1.5,-1) {};
    \node[rvertex_34, label ={right:$v$}] (u4) at (1,0) {};
    \node[vertex_34, label ={right:$d_3$}] (u5) at (2.5,-1) {};
    \node[vertex_34, label ={right:}] (u6) at (2.5,-2) {};
    \node[vertex_34, label ={right:}] (u7) at (1.5,-3) {};
    \node[vertex_34, label ={right:}] (u8) at (3.5,-3) {};

	\draw[edge_34] (u1) -- (u4);
	\draw[edge_34] (u2) -- (u4);
	\draw[sedge_34] (u3) -- (u4);
	\draw[edge_34] (u4) -- (u5);
	\draw[edge_34] (u5) -- (u6);
	\draw[edge_34] (u6) -- (u7);
	\draw[edge_34] (u6) -- (u8);
\end{tikzpicture}
\subcaption{Rooted subtree $H^s_v$}
   \label{34_example:mcs-tree:graph2}
   \end{subfigure}
   \begin{subfigure}[b]{0.24\textwidth}
\begin{tikzpicture}[scale=0.58,auto]

    \node[vertex_34, label ={left:$c_1$}] (u6) at (0,0.5) {};
    \node[vertex_34, label ={left:$c_2$}] (u7) at (0,-0.5) {};

    \node[vertex_34, label ={right:$d_1$}] (v1) at (3,1) {};
    \node[vertex_34, label ={right:$d_2$}] (v2) at (3,0) {};
    \node[vertex_34, label ={right:$d_3$}] (v5) at (3,-1) {};

	\draw[edge_34] (u6) -- node {} (v1); 
	\draw[edge_34] (u6) -- node {} (v2); 
	\draw[edge_34] (u6) -- (v5); 

	\draw[edge_34] (u7) -- (v1); 
	\draw[edge_34] (u7) -- (v2); 
	\draw[edge_34, ultra thick] (u7) -- node[yshift=-15,xshift=-8] {4} (v5); 
\end{tikzpicture}
\subcaption{Matching problem}
   \label{34_example:mcs-tree:mwbm}
\end{subfigure}
   \begin{subfigure}[b]{0.24\textwidth}
\begin{tikzpicture}[scale=0.58,auto]

    \node[vertex_34, label ={left:$c_1$}] (u6) at (0,0.5) {};
    \node[vertex_34, label ={left:$c_2$}] (u7) at (0,-0.5) {};

    \node[vertex_34, label ={right:$d_1$}] (v1) at (3,1) {};
    \node[vertex_34, label ={right:$d_2$}] (v2) at (3,0) {};
    \node[vertex_34, label ={right:$d_3$}] (v5) at (3,-1) {};

	\draw[edge_34,draw=cyan] (u6) -- node {\textcolor{cyan}{1}} (v1); 
	\draw[edge_34] (u6) -- node {} (v2); 
	\draw[edge_34] (u6) -- (v5); 

	\draw[edge_34] (u7) -- (v1); 
	\draw[edge_34] (u7) -- (v2); 
	\draw[edge_34, ultra thick,draw=cyan] (u7) -- node[yshift=-15,xshift=-8] {\textcolor{cyan}{4}} (v5); 
\end{tikzpicture}
   \subcaption{MWM}
   \label{34_example:mcs-tree:mwbmsol}
    \end{subfigure}
   \caption{Two rooted subtrees~(\subref{34_example:mcs-tree:graph1}) and
     (\subref{34_example:mcs-tree:graph2}), the associated weighted matching 
     instance~(\subref{34_example:mcs-tree:mwbm}), and an \MWM on that instance~(\subref{34_example:mcs-tree:mwbmsol}).
     Light gray vertices and edges are not part of the rooted subtrees, root 
     vertices are shown in solid black. The maximum weight matching is shown in blue.
     We assume a weight function $\wIso$ with
     $\wIso(u,v) = 1$ for all $(u,v) \in V_G \times V_H$ and $\wIso(e,f) = 0$
     for all $(e,f) \in E_G \times E_H$. The edges without label 
     in~(\subref{34_example:mcs-tree:mwbm}) have weight $1$.
   }
   \label{34_example:mcs-tree}
\end{figure}

We first consider the problem restricted to rooted trees under the assumption that
the roots of the two trees must be mapped to each other. 
For a rooted tree $G^r$ we define the \emph{rooted subtree} $G^r_u$ as the subtree 
induced by $u$ and all its descendants in $G^r$ that is rooted at $u$, cf. Figures~\ref{34_example:mcs-tree}\subref{34_example:mcs-tree:graph1} and~\subref{34_example:mcs-tree:graph2}.
Note that $G^r_r = G^r$ and that $G^r_u$ and $G^s_u$ both refer to the 
same subtree unless $s$ is contained in $G^r_u$. 
The key to solving MCST 
for two rooted trees $G^r$ and $H^s$ is the following recursive formulation:
\begin{equation}\label{34_eq:basic_recursion}
 \MCSR(G^r,H^s) = \wIso(r, s) + \WMatch(M),
\end{equation}
where $M$ is an \MWM of the complete bipartite graph on the vertex set 
$C(r) \uplus C(s)$ with weights $\wMatch(uv) = \wIso(ru,sv) + \MCSR(G^r_u, H^s_v)$ for all 
$u \in C(r)$ and $v \in C(s)$.
Hence, each edge weight corresponds to the solution of a problem of the same type 
for a pair of smaller rooted subtrees and the recursion naturally stops at the leaves.
Each subproblem, the initial one as well as those arising in recursive calls, is uniquely 
defined by a pair of rooted subtrees and essentially consists of solving a matching instance.

Figure~\ref{34_example:mcs-tree} illustrates the two rooted subtrees $G^r_u$ and $H^s_v$ and the corresponding matching problem under the weight function as given in the figure.
For rooted trees $G^r$ and $H^s$ this problem arises on the second level in the recursion of Eq. \eqref{34_eq:basic_recursion}.  We obtain $\MCSR(G^r_u, H^s_v)=\wIso(u, v)+W(M)=1+5=6$, where $M$ is an \MWM of Figure~\ref{34_example:mcs-tree}\subref{34_example:mcs-tree:mwbm}, depicted in Figure~\ref{34_example:mcs-tree}\subref{34_example:mcs-tree:mwbmsol}.

In order to compute Eq.~\eqref{34_eq:basic_recursion} the subproblems defined by the pairs of rooted subtrees 
$\mathcal{S}_\text{root}(G^r,H^s):= \{(G^r_u, H^s_v) \mid {\rm depth}(u)={\rm depth}(v) \}$ have to be solved.

\begin{proposition}\label{34_prop:rooted}
 A maximum common subtree isomorphism for two rooted trees $G^r$ and $H^s$ can be 
 computed in time $\mathcal{O}(n^3)$, where $n=|G|+|H|$.
\end{proposition}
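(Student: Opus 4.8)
The plan is to turn the recursion of Eq.~\eqref{34_eq:basic_recursion} into a bottom-up dynamic program over the subproblems in $\mathcal{S}_\text{root}(G^r,H^s)$ and then to bound the total matching work by a double-counting argument. Write $n_1=|G|$ and $n_2=|H|$, so $n=n_1+n_2$. First I would process the pairs $(G^r_u,H^s_v)$ with ${\rm depth}(u)={\rm depth}(v)$ in order of non-increasing depth, storing each value $\MCSR(G^r_u,H^s_v)$ in a table indexed by $(u,v)$. When $(u,v)$ is treated, every pair $(a,b)$ with $a\in C(u)$ and $b\in C(v)$ lies one level deeper and has already been computed, so the edge weights $\wMatch(ab)=\wIso(ua,vb)+\MCSR(G^r_a,H^s_b)$ of the instance on $C(u)\uplus C(v)$ are each available in constant time. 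Solving this single \MWM{} and adding $\wIso(u,v)$ yields the entry via Eq.~\eqref{34_eq:basic_recursion}, and the answer is the entry for $(r,s)$ at depth $0$. Correctness is immediate from the recursion, and an optimal isomorphism itself is recovered by back-tracking through the stored matchings, which does not affect the asymptotic running time.

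For the analysis the crucial observation is that the matching instances, though numerous, are collectively small. Each edge of the instance for $(u,v)$ corresponds to a pair $(a,b)$ with $p(a)=u$ and $p(b)=v$, hence with ${\rm depth}(a)={\rm depth}(b)$; conversely, every equal-depth pair of non-root vertices $(a,b)$ occurs as an edge in exactly one instance, namely the one for $(p(a),p(b))$, which indeed belongs to $\mathcal{S}_\text{root}(G^r,H^s)$ since ${\rm depth}(p(a))={\rm depth}(p(b))$. Therefore the total number of matching edges satisfies $\sum_{(u,v)}|C(u)|\,|C(v)|\le n_1 n_2=\mathcal{O}(n^2)$, and computing all edge weights across all instances costs $\mathcal{O}(n^2)$ in total. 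As the table has $\mathcal{O}(n^2)$ entries, everything apart from the matchings fits into $\mathcal{O}(n^2)$ time.

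The remaining and, I expect, main point is bounding the matching work. Computing an \MWM{} on a bipartite graph with $p+q$ vertices and $pq$ edges takes $\mathcal{O}((p+q)\,pq)$ time with the Hungarian method (for $p\le q$ this dominates the $\mathcal{O}(p^2q)$ running time). Using the crude estimate $|C(u)|+|C(v)|\le n$, the instance for $(u,v)$ costs $\mathcal{O}(n\,|C(u)|\,|C(v)|)$, so summing over all instances and invoking the edge count above gives $\sum_{(u,v)} n\,|C(u)|\,|C(v)| = n\cdot\mathcal{O}(n^2)=\mathcal{O}(n^3)$. Together with the $\mathcal{O}(n^2)$ overhead this establishes the claimed bound. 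The only subtlety I would be careful about is that restricting the subproblems to equal-depth pairs is precisely what makes the double counting exact, so the bottom-up order and the indexing must faithfully respect the set $\mathcal{S}_\text{root}(G^r,H^s)$.
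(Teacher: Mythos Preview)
Your argument is correct and reaches the same $\mathcal{O}(n^3)$ bound, but by a somewhat different route than the paper. For the per-instance cost you use the coarse Hungarian estimate $\mathcal{O}((p+q)\,pq)$, whereas the paper invokes the sharper bound $\mathcal{O}(pq(\min\{p,q\}+\log\max\{p,q\}))$ of Lemma~\ref{34_lem:mwmtime}; both are valid upper bounds here, and your choice has the advantage of being self-contained rather than forward-referencing a lemma proved only in Section~\ref{34_sec:MWM}. For the aggregation you first observe that the matching edges across all instances are in bijection with the equal-depth child pairs, giving $\sum_{(u,v)}|C(u)|\,|C(v)|\le n_1 n_2=\mathcal{O}(n^2)$, and then pull out the uniform factor $|C(u)|+|C(v)|\le n$. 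The paper instead drops the equal-depth constraint, sums over all of $V_G\times V_H$, and uses $\sum_u k_u\le n$ together with $\sum_v l_v^2\le n^2$. Your double-counting is a clean way to see the edge bound, though in fact even without the depth restriction one already has $\sum_{u\in V_G}\sum_{v\in V_H}|C(u)|\,|C(v)|=(n_1-1)(n_2-1)$, so the restriction to $\mathcal{S}_\text{root}$ is not what makes the bound work. The paper's choice of per-instance bound is mainly setting the stage for the refinements in Proposition~\ref{34_prop:unrooted} and Theorem~\ref{34_th:runtime:mcsi}; for the present statement your more elementary estimate is entirely adequate.
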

\begin{proof}
 The bipartite graph for the subproblem $(G^r_u, H^s_v)$ contains $k_u+l_v$ vertices, where $k_u := |C(u)|$ 
 and $l_v := |C(v)|$.
 For the total running time we distinguish the cases $k_u \leq l_v$ and $k_u > l_v$. 
 For the first case we obtain a MWM in time $\mathcal{O}(k_u l_v(k_u+\log l_v))$ according to Lemma \ref{34_lem:mwmtime}. The second case is analog. Since 
 $\mathcal{S}_\text{root}(G^r,H^s) \subseteq \{(G^r_u, H^s_v) \mid u \in V_G, v \in V_H \}$
 the total running time is bounded by $\mathcal O(n^3)$ as
 \begin{equation*}
  \sum_{u \in V_G} \sum_{v \in V_H,k_u\leq l_v} k_u l_v(k_u+\log l_v) \leq
  \sum_{u \in V_G} k_u \sum_{v \in V_H} l_v(l_v+\log l_v)\leq
  n\cdot 2 n^2 \in \mathcal{O}(n^3)
 \end{equation*}
\end{proof}

\subparagraph{Unrooted trees.}
We now consider the problem for unrooted trees. An immediate solution is to solve
the rooted problem variant for all possible pairs of roots, i.e., by computing
\begin{equation}\label{34_eq:all_roots}
\textsc{MCS}(G,H) := \max \left\{\MCSR(G^r,H^s) \mid r \in V(G), s \in V(H)\right\}.
\end{equation}
Clearly, this yields the optimal solution in time $\mathcal{O}(n^5)$ with 
Proposition~\ref{34_prop:rooted}.
Note that several recursive calls involve solving the same subproblem. 
Repeated computation can easily be avoided by means of a lookup table. 
Let $\textsc{Rt}(G^r) := \{ G^r_u \mid u \in V(G) \}$ and
$\textsc{Rt}(G) := \bigcup_{r \in V(G)} \textsc{Rt}(G^r)$.
Note that we may uniquely associate the subtree $G^r_u$ with $G^p_u$, where 
$p$ is the parent of $u$ in $G^r$.
Hence, each rooted subtree $G^u_v \in \textsc{Rt}(G)$ either is the whole
tree $G$ with root $u=v$ or is the subtree rooted at $v$ of 
some edge $uv \in E(G)$, where $u$ is not contained in the subtree.
Thus, 
$\textsc{Rt}(G) = \{G^u_v \mid v \in V(G) \wedge u \in N(v) \cup \{v\} \}$ is 
the set of all rooted subtrees of $G$. In total the subproblems defined by
$\mathcal{S}(G,H):= \textsc{Rt}(G) \times \textsc{Rt}(H)$ have to be solved.

However, ensuring that each subproblem is solved only once does not 
allow to improve the bound on the running time, since $\mathcal{S}(G,H)$ still 
may contain a quadratic number of subproblems of linear size:
Let $G$ and $H$ be two star graphs on $n$ vertices, i.e., trees with all but one 
vertex of degree one. Each of the $(n-1)^2$ pairs of leaves can be selected as 
root pair and leads to a different subproblem of size $n-1$.

We show that it is sufficient to consider only a subset of the subproblems to 
guarantee that an optimal solution is found. Let 
\begin{equation}\label{34_eq:selected_roots}
\textsc{MCS}_\text{fast}(G^r,H) := \max \left\{\MCSR(G^r_u,H^s) \mid u \in V(G), s \in V(H)\right\},
\end{equation}
where $r \in V(G)$ is an arbitrary but fixed root of $G$. To compute 
Eq.~\eqref{34_eq:selected_roots}, only the subproblems 
$\mathcal{S}_\text{fast}(G^r,H) := \textsc{Rt}(G^r) \times \textsc{Rt}(H) \subseteq \mathcal{S}(G,H)$ need to be solved.

\begin{lemma}\label{34_lem:mcs_fast}
Let $\textsc{MCS}_\textnormal{fast}$ and $\textsc{MCS}$ be defined as above and  $r\in V(G)$ arbitrary but fixed, then $\textsc{MCS}_\textnormal{fast}(G^r,H) = \textsc{MCS}(G,H)$ for all trees $G$, $H$.
\end{lemma}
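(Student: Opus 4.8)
The plan is to prove the two inequalities $\textsc{MCS}_\textnormal{fast}(G^r,H) \le \textsc{MCS}(G,H)$ and $\textsc{MCS}_\textnormal{fast}(G^r,H) \ge \textsc{MCS}(G,H)$ separately. Throughout I interpret $\MCSR(A,B)$, for rooted trees $A$ and $B$, as the maximum weight of a common subtree isomorphism that maps the root of $A$ to the root of $B$; unrolling the recursion in Eq.~\eqref{34_eq:basic_recursion}, such an isomorphism maps a subtree of $A$ containing its root onto a subtree of $B$ containing its root, and, since a tree isomorphism preserves distances, it respects the parent--child relation once the two roots are identified. These two observations (rooted subtrees and depth preservation) are the only facts about $\MCSR$ that I will need.

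For the upper bound I would argue that each term of $\textsc{MCS}_\textnormal{fast}$ is dominated by a term of $\textsc{MCS}$. Fix $u\in V(G)$ and $s\in V(H)$. The rooted subtree $G^r_u$ is an induced subtree of $G$ in which, after re-rooting $G$ at $u$, the vertex $u$ has minimum depth; hence $G^r_u$ is a rooted subtree of $G^u$ sharing the root $u$. Consequently every common subtree isomorphism witnessing $\MCSR(G^r_u,H^s)$ is also admissible for $\MCSR(G^u,H^s)$, so $\MCSR(G^r_u,H^s)\le \MCSR(G^u,H^s)$. Taking the maximum over all $u\in V(G)$ and $s\in V(H)$ and comparing with Eq.~\eqref{34_eq:all_roots} gives $\textsc{MCS}_\textnormal{fast}(G^r,H)\le \textsc{MCS}(G,H)$.

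The substance of the lemma lies in the reverse inequality, where the fixed root $r$ must be reconciled with an optimum that may be anchored elsewhere. I would start from an optimal root pair $(a,b)$ with $\textsc{MCS}(G,H)=\MCSR(G^a,H^b)$ and a witnessing isomorphism $\varphi$ between subtrees $G'\subseteq G$ and $H'\subseteq H$ with $\varphi(a)=b$. The key structural fact is that, in the fixed rooting $G^r$, the connected subtree $G'$ has a \emph{unique} vertex $u^*$ of minimum depth: two distinct minimum-depth vertices would be joined inside $G'$ by the unique tree path through their common ancestor, which has strictly smaller depth and lies in $G'$, a contradiction. The same argument shows that every vertex of $G'$ is a descendant of $u^*$, so $G'$ is a rooted subtree of $G^r_{u^*}$ with root $u^*$. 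Setting $s^*:=\varphi(u^*)$ and re-rooting $H$ at $s^*$, the subtree $H'$ contains the root $s^*$; because $\varphi$ preserves distances, rooting both subtrees at the mapped pair $u^*\mapsto s^*$ makes $\varphi$ map depths to depths and parents to parents. Thus $\varphi$ is an admissible witness for $\MCSR(G^r_{u^*},H^{s^*})$, so $\MCSR(G^r_{u^*},H^{s^*})\ge \WIso(\varphi)=\textsc{MCS}(G,H)$. As $u^*\in V(G)$ and $s^*\in V(H)$, definition~\eqref{34_eq:selected_roots} yields $\textsc{MCS}_\textnormal{fast}(G^r,H)\ge\textsc{MCS}(G,H)$, and combining the two bounds proves the claim.

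The main obstacle I anticipate is exactly this re-anchoring step: one must verify that an optimal isomorphism, whose weight $\WIso(\varphi)$ and image $H'$ are fixed, can be re-expressed as a rooted isomorphism anchored at the top vertex $u^*$ of $G'$ relative to the arbitrary but fixed root $r$, \emph{without} recomputing anything. This rests entirely on the uniqueness of the minimum-depth vertex of a connected subtree and on the fact that re-rooting a common subtree isomorphism at any mapped pair of vertices leaves it a valid parent-respecting rooted isomorphism; both are immediate consequences of the unique-path property of trees, which keeps the argument purely structural.
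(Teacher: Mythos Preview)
Your proof is correct and follows essentially the same approach as the paper: identify the unique vertex $u^*$ of the optimal isomorphism's domain that is closest to the fixed root $r$, and observe that the isomorphism is then a witness for $\MCSR(G^r_{u^*},H^{\varphi(u^*)})$. The paper's version is terser---it omits the easy inequality $\textsc{MCS}_\textnormal{fast}\le\textsc{MCS}$ and the justification that the minimum-depth vertex is unique---but the structural idea is identical.
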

\begin{proof}
Let $\phi$ be an \MCSI. If $r$ is in the domain of $\phi$, then $\wIso(\phi)=\MCSR(G^r,H^{\phi(r)})=\textsc{MCS}_\textnormal{fast}(G^r,H)$.
Otherwise the domain of $\phi$ is contained in the subtree rooted at one child of $r$. Let $u$ be the unique vertex that is closest to $r$ and mapped by $\phi$. Then  $\wIso(\phi)=\MCSR(G^r_u,H^{\phi(u)})=\textsc{MCS}_\textnormal{fast}(G^r,H)$.
\end{proof}

\begin{algorithm}[t]
  \caption{Maximum Common Subtree Isomorphism}
  \label{34_alg:treemcs}
  \Input{Trees $G$ and $H$ under a weight function $\wIso$}
  \Output{Weight of an \MCSI between $G$ and $H$.}
  \Data{Table $D(u,s,v)$ storing solutions $\MCSR(G^r_u,H^s_v)$ of 
        subproblems.}
  Select an arbitrary root vertex $r\in V_G$.\;
  \ForEach(\note*[f]{All possible $G^r_u \in \textsc{Rt}(G^r)$}){$u\in V_G$ in postorder traversal on $G^r$} {\label{34_alg:treemcs:post}
    $U \gets C(u)$ in $G^r$\;
    \ForEach{$v \in V_H$} {\label{34_alg:treemcs:nodev}
      \ForEach(\note*[f]{All possible $H^s_v \in \textsc{Rt}(H)$}){$s \in N(v) \cup \{v\}$\label{34_alg:treemcs:nodes}} {
        $V \gets C(v)$ in $H^s$\; \label{34_alg:treemcs:setV}
        \uIf{$\wIso(u,v) \neq -\infty$}{
          \ForEach{pair $(u',v') \in U \times V$} {
            $\wMatch(u'v') \gets \wIso(uu',vv') + D(u',v,v')$\; \label{34_alg:treemcs:DP}
          }
          $M \gets$ \MWM of the complete graph on $U \uplus V$ with weights $\wMatch$.\; \label{34_alg:treemcs:compMWM}
          $D(u,s,v) \gets \wIso(u,v) + \WMatch(M)$\;
        } \lElse { 
          $D(u,s,v) \gets -\infty$
        }
      }
    }   
  }
  \Return the maximum entry in $D$\;
\end{algorithm}

Algorithm~\ref{34_alg:treemcs} implements this strategy, where the postorder 
traversal on $G^r$ (line~\ref{34_alg:treemcs:post}) ensures that the solutions to 
smaller subproblems are always available when required (line~\ref{34_alg:treemcs:DP}).
The lookup table contains one entry for each subproblem in $\mathcal{S}_\text{fast}(G^r,H)$ and hence requires
space $\mathcal{O}(n^2)$. %
Note that it is also possible to compute a concrete isomorphism from the \MWMs 
associated with the computed optimal solution.
The restriction of the considered subproblems allows to improve the bound on the running time.

\begin{proposition}\label{34_prop:unrooted}
 Algorithm~\ref{34_alg:treemcs} solves the maximum common subtree isomorphism problem
 for two trees $G$ and $H$ in time $\mathcal{O}(n^4)$, where $n=|G|+|H|$.
\end{proposition}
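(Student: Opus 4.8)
The plan is to treat correctness and running time separately. For \textbf{correctness} I would first show, by induction along the postorder traversal of $G^r$ (line~\ref{34_alg:treemcs:post}), that $D(u,s,v)=\MCSR(G^r_u,H^s_v)$ for every triple $(u,s,v)$ processed by Algorithm~\ref{34_alg:treemcs}. In the base case $u$ is a leaf of $G^r$, so $U=C(u)=\emptyset$, the matching on $U\uplus V$ is empty, and $D(u,s,v)=\wIso(u,v)$, which agrees with Eq.~\eqref{34_eq:basic_recursion} for a single-vertex subtree. In the inductive step every child $u'\in C(u)$ precedes $u$ in postorder, so the entries $D(u',v,v')=\MCSR(G^r_{u'},H^v_{v'})$ are already available; since the subtree of $H^s_v$ rooted at a child $v'$ of $v$ is precisely $H^v_{v'}$, the weights $\wMatch(u'v')$ set in line~\ref{34_alg:treemcs:DP} coincide with those of the recursion, and computing an \MWM reproduces Eq.~\eqref{34_eq:basic_recursion}. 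It then remains to relate the returned maximum to the problem: every $D(u,s,v)$ is the weight of a common subtree isomorphism of $G$ and $H$, so $\max D\le\textsc{MCS}(G,H)$; conversely each full rooting $H^s=H^s_s$ occurs in the table as $D(u,s,s)$ (the case $s=v$ in line~\ref{34_alg:treemcs:nodes}), so $\max D\ge\textsc{MCS}_\text{fast}(G^r,H)$. By Lemma~\ref{34_lem:mcs_fast} the two sides coincide, giving $\max D=\textsc{MCS}(G,H)$.

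For the \textbf{running time} I would count the subproblems and sum the matching costs. Since $|\textsc{Rt}(G^r)|=|V_G|$ and $|\textsc{Rt}(H)|=\sum_{v}(\delta(v)+1)\in\mathcal O(n)$, the algorithm solves $\mathcal O(n^2)$ matching instances. The main obstacle is that the naive estimate ``number of subproblems times worst-case matching cost'' yields only $\mathcal O(n^2)\cdot\mathcal O(n^3)=\mathcal O(n^5)$, because a single instance can have $\Theta(n)$ vertices at a high-degree vertex. The fix is the amortized, factored summation already used in Proposition~\ref{34_prop:rooted}. Writing $k_u:=|C(u)|$ and $l_{v,s}:=|C(v)|$ in $H^s$, Lemma~\ref{34_lem:mwmtime} bounds the cost of subproblem $(u,s,v)$ by $\mathcal O(k_u\,l_{v,s}\,(k_u+\log l_{v,s}))$, hence by $\mathcal O(k_u\,l_{v,s}\,n)$; the weight-setup loop and the table bookkeeping are dominated by this.

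The total then factors as $\mathcal O(n)\,\bigl(\sum_{u\in V_G}k_u\bigr)\bigl(\sum_{v\in V_H}\sum_{s\in N(v)\cup\{v\}}l_{v,s}\bigr)$. Here $\sum_u k_u=|E_G|\in\mathcal O(n)$, while for fixed $v$ the inner sum evaluates to $\delta(v)$ (for $s=v$) plus $\delta(v)(\delta(v)-1)$ (over the $\delta(v)$ choices $s\in N(v)$), i.e.\ exactly $\delta(v)^2$; summing over $v$ gives $\sum_v\delta(v)^2\le\Delta(H)\sum_v\delta(v)\in\mathcal O(\Delta(H)\,n)\subseteq\mathcal O(n^2)$. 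Combining the three factors yields $\mathcal O(n)\cdot\mathcal O(n)\cdot\mathcal O(n^2)=\mathcal O(n^4)$, as claimed. I expect the only delicate points to be the correct identification of the subtree $H^v_{v'}$ and its table index in the recursion, and the degree-sum amortization that separates $\mathcal O(n^4)$ from the naive $\mathcal O(n^5)$.
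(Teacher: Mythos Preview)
Your proposal is correct and follows essentially the same approach as the paper: correctness via Lemma~\ref{34_lem:mcs_fast} (you spell out the underlying induction, which the paper leaves implicit) and the $\mathcal O(n^4)$ running time via a degree-sum amortization of the matching costs from Lemma~\ref{34_lem:mwmtime}. The only cosmetic difference is in how the sum is organized---the paper fixes a root $s\in V_H$ and invokes the $\mathcal O(n^3)$ bound of Proposition~\ref{34_prop:rooted} for each of the at most $n$ choices of $s$, while you factor $\sum_{u,v,s} k_u\,l_{v,s}\,n$ directly using $\sum_u k_u\in\mathcal O(n)$ and $\sum_{v,s} l_{v,s}=\sum_v \delta(v)^2\in\mathcal O(n^2)$.
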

\begin{proof}
 According to Lemma~\ref{34_lem:mcs_fast} computing Eq.~\eqref{34_eq:selected_roots}
 yields the optimal solution and it suffices to solve the subproblems 
 $\mathcal{S}_\text{fast}(G^r,H)$ as realized by Algorithm~\ref{34_alg:treemcs}. 
 Let $k_u$ be the number of children of $u$ in $G^r_u$, $l^s_v$ the number of 
 children of $v$ in $H^s, s \in V(H)$, and $l_v=|N(v)|$. For all $s$ we have $l^s_v \leq l_v$.
 Similar to Proposition \ref{34_prop:rooted} the subproblems $\mathcal{S}_\text{fast}(G,H)$ can be solved in a total time of
 \begin{equation*}
   \mathcal{O}\left( \sum_{u \in V_G} \sum_{s \in V_H} \sum_{v \in V_H} (k_ul^s_v)(\min\{k_u,l^s_v\}+\log\max\{k_u,l^s_v\})\right) \subseteq
   \mathcal{O}\left(  \sum_{s \in V_H} n^3  \right) \subseteq
   \mathcal{O}\left( n^4 \right).
 \end{equation*}
\end{proof}

Further improvement of the running time is possible by no longer considering the 
MWM subroutine as a black box. We pursue this direction in the next section.
Our findings there yield the following theorem.

\begin{theorem}
\label{34_th:runtime:mcsi}
An \MCSI between two unrooted trees $G$ and $H$ can be computed in time $\mathcal{O}(|G||H| (\min \{\Delta(G),\Delta(H)\}+\log\max \{\Delta(G),\Delta(H)\}))$.
\end{theorem}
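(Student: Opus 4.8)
The plan is to refine the counting argument of Proposition~\ref{34_prop:unrooted} by no longer treating the \MWM computation as a black box, but instead solving, for each pair of vertices, a whole family of closely related matching instances together. As in Algorithm~\ref{34_alg:treemcs}, I would fix an arbitrary root $r$ of $G$; by Lemma~\ref{34_lem:mcs_fast} it then suffices to solve the subproblems in $\mathcal S_\text{fast}(G^r,H)=\textsc{Rt}(G^r)\times\textsc{Rt}(H)$. Because $G$ is rooted once and for all, the number of children $k_u:=|C(u)|$ of each $u\in V_G$ in $G^r$ is fixed and satisfies $k_u\le\Delta(G)$. The decisive reorganization is to group the subproblems by the pair $(u,v)\in V_G\times V_H$ rather than to iterate independently over the rooting $s$ of $H$.

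For a fixed pair $(u,v)$, the matching instances associated with $(G^r_u,H^s_v)$ for the $\delta(v)+1$ choices $s\in N(v)\cup\{v\}$ all share the same left part $C(u)$ of size $k_u$ and, by line~\ref{34_alg:treemcs:DP}, the same weight matrix $\wMatch(u'v')=\wIso(uu',vv')+D(u',v,v')$, which does not depend on $s$. They differ only in their right part: for $s=v$ it is the full neighborhood $N(v)$ of size $\delta(v)$, and for each $s\in N(v)$ it is $N(v)\setminus\{s\}$. Hence the family consists of one full instance together with all of its single-vertex deletions on the right side (its ``leave-one-out'' instances).

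I would then invoke the analysis of exactly these matching instances developed in Section~\ref{34_sec:MWM}: starting from a single \MWM of the full instance together with its optimal dual potentials, the $\delta(v)$ deletions are resolved by re-using these potentials and handling them jointly rather than recomputing from scratch, so that the entire family of $\delta(v)+1$ instances is solved within the time bound of the single largest instance, namely $\mathcal O\bigl(k_u\,\delta(v)\,(\min\{k_u,\delta(v)\}+\log\max\{k_u,\delta(v)\})\bigr)$ by Lemma~\ref{34_lem:mwmtime}. This is precisely the point where the matching routine ceases to be a black box, and I expect it to be the main obstacle: one must show that deleting a single right vertex perturbs the optimal matching in a controlled way, and, crucially, that the $\delta(v)$ deletions \emph{together} do not exceed the cost of one fresh computation (a naive per-deletion augmentation would be too expensive when $\delta(v)\gg k_u$), which is what the structural results of Section~\ref{34_sec:MWM} are meant to supply.

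Granting this, the theorem follows by summation. Using $k_u\le\Delta(G)$ and $\delta(v)\le\Delta(H)$ I bound $\min\{k_u,\delta(v)\}\le\min\{\Delta(G),\Delta(H)\}$ and $\log\max\{k_u,\delta(v)\}\le\log\max\{\Delta(G),\Delta(H)\}$, so the total running time is
\begin{equation*}
\mathcal O\!\left(\sum_{u\in V_G}\sum_{v\in V_H} k_u\,\delta(v)\,\bigl(\min\{k_u,\delta(v)\}+\log\max\{k_u,\delta(v)\}\bigr)\right)
\subseteq
\mathcal O\!\left(\Bigl(\textstyle\sum_{u\in V_G}k_u\Bigr)\Bigl(\sum_{v\in V_H}\delta(v)\Bigr)\bigl(\min\{\Delta(G),\Delta(H)\}+\log\max\{\Delta(G),\Delta(H)\}\bigr)\right).
\end{equation*}
Since $\sum_{u\in V_G}k_u=|G|-1=\mathcal O(|G|)$ (each non-root vertex is a child exactly once) and $\sum_{v\in V_H}\delta(v)=2|E_H|=\mathcal O(|H|)$ for trees, this equals $\mathcal O(|G||H|(\min\{\Delta(G),\Delta(H)\}+\log\max\{\Delta(G),\Delta(H)\}))$, as claimed. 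The remaining bookkeeping (the postorder traversal, the table accesses, and the $\mathcal O(k_u\delta(v))$ weight evaluations per pair) is dominated by this bound, so the overall cost matches the stated running time; note that the bound is symmetric in $G$ and $H$, so rooting either tree yields it.
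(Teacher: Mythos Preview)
Your proposal is correct and follows essentially the same route as the paper: the paper's proof of Theorem~\ref{34_th:runtime:mcsi} defers entirely to Theorem~\ref{34_th:runtime:allmwm}, whose proof is precisely the grouping-by-$(u,v)$ and summation argument you have written out, relying on the leave-one-out analysis of Section~\ref{34_sec:MWM}. One minor point: the bound you attribute to Lemma~\ref{34_lem:mwmtime} for the \emph{entire family} of $\delta(v)+1$ instances is really the content of Lemma~\ref{34_lem:similartime} (which uses that at most $\min\{k_u,\delta(v)\}$ of the deletions hit a matched vertex, so only that many extra augmentations are needed); Lemma~\ref{34_lem:mwmtime} alone covers only the single full instance.
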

\begin{proof}
The MWM computations in Algorithm~\ref{34_alg:treemcs} are dominating, thus we obtain the above running time directly from Theorem~\ref{34_th:runtime:allmwm} of the following section.
\end{proof}

\section{Computing All Maximum Weight Matchings}
\label{34_sec:MWM}
In this section we improve the total time bound for solving all the matching
instances arising in Algorithm~\ref{34_alg:treemcs}.
First, we provide a time bound to compute an \MWM in a single bipartite graph $(V\uplus U,E)$, where possibly $|V|\neq|U|$.
In the following, we exploit the fact that during the run of our algorithm, we get sets of ``similar'' bipartite graph instances.
After computing an \MWM on one graph in one of the sets, we can derive \MWMs for all the other bipartite graphs in that set very efficiently.
Finally, we provide an upper bound to compute an \MWM in all the occurring bipartite graphs.

Computing an \MWM is closely related to finding an \MWPM and there is extensive
literature on both problems~\cite{Duan2012}.
Gabow and Tarjan~\cite{GT1989} describe a reduction to solve the \MWM problem using any algorithm for \MWPM, without altering the algorithm's asymptotic time bound, which we will make use of.
For computing an \MWPM, we use the well known Hungarian method, which has at most $n$ iterations in its outer loop and a total running time of $\mathcal O(n^3)$ or $\mathcal O(n (m + n\log n))$ using Fibonacci heaps, where $n$ and $m$ denote the number of vertices and edges of the bipartite graph.
We denote this algorithm by $\mathcal A_{\rm PM}$.

\tikzstyle{Vertex_34}=[circle, draw = black, fill=black, inner sep=0pt, minimum width=4.2pt]
\tikzstyle{VertexM_34}=[circle, draw = cyan, fill=cyan, inner sep=0pt, minimum width=4.2pt]
\tikzstyle{Edge_34} = [draw,thick,-]
\tikzstyle{EdgeM_34} = [draw=cyan,thick,-]
\tikzstyle{EdgeMM_34} = [draw=cyan,very thick,-]
\tikzstyle{EdgeA_34} = [draw=black,->]

\begin{figure}[t]
	\centering
   \begin{subfigure}[b]{0.24\textwidth}
		\begin{tikzpicture}[scale = 1.00]
			\node (a) at (0.375,1.4) [Vertex_34,label=right:\small{$v_1$}] {};
			\node (b) at (1.875,1.4) [Vertex_34,label=left:\small{$v_2$}] {};
			\node (c) at (0,0.4) [Vertex_34,label=below:\small{$u_1$}] {}
				edge [Edge_34] (a);
			\node (d) at (0.75,0.4) [Vertex_34,label=below:\small{$u_2$}] {}
				edge [Edge_34] (a);
			\node (e) at (1.5,0.4) [Vertex_34,label=below:\small{$u_3$}] {}
				edge [Edge_34] (a)
				edge [Edge_34] (b);
			\node (f) at (2.25,0.4) [Vertex_34,label=below:\small{$u_4$}] {}
				edge [Edge_34] (b);
			\node at (0.02,1) {\small{-$1$}};
			\node at (0.47,0.8) {\small{$3$}};
			\node at (1.05,1) {\small{$4$}};
			\node at (1.58,1) {\small{$2$}};
			\node at (2.15,1) {\small{$3$}};
			\node at (1.2,-1.5) {};	
		\end{tikzpicture}
		\subcaption{Input graph $B$}
		\label{34_fig:bipartiteG}
	   \end{subfigure}
	   \hfill
\begin{subfigure}[b]{0.24\textwidth}
		\begin{tikzpicture}[scale = 1.00]
			\node (a) at (0.375,1.4) [Vertex_34,label=right:\textcolor{green}{\small{$4$}}] {};
			\node (b) at (1.875,1.4) [Vertex_34,label=left:\textcolor{green}{\small{$3$}}] {};
			\node (c) at (0,0.4) [Vertex_34] {};
			\node (d) at (0.75,0.4) [Vertex_34] {}
				edge [Edge_34] (a);
			\node (e) at (1.5,0.4) [Vertex_34] {}
				edge [Edge_34] (a)
				edge [Edge_34] (b);
			\node (f) at (2.25,0.4) [Vertex_34] {}
				edge [Edge_34] (b);
			\node at (0.47,0.8) {\small{$3$}};
			\node at (1.05,1) {\small{$4$}};
			\node at (1.58,1) {\small{$2$}};
			\node at (2.17,1) {\small{$3$}};
			
			\node (ac) at (0.375,-1.4) [Vertex_34,label=right:\textcolor{green}{\small{$4$}}] {}
				edge [Edge_34, bend angle=12, bend left] (a);
			\node (bc) at (1.875,-1.4) [Vertex_34,label=left:\textcolor{green}{\small{$3$}}] {}
				edge [Edge_34, bend angle=7, bend right] (b);
			\node (cc) at (0,-0.4) [Vertex_34] {}
				edge [EdgeM_34] (c);
			\node (dc) at (0.75,-0.4) [Vertex_34] {}
				edge [Edge_34] (ac)
				edge [EdgeM_34] (d);
			\node (ec) at (1.5,-0.4) [Vertex_34] {}
				edge [Edge_34] (ac)
				edge [Edge_34] (bc)
				edge [EdgeM_34] (e);
			\node (fc) at (2.25,-0.4) [Vertex_34] {}
				edge [Edge_34] (bc)
				edge [EdgeM_34] (f);
			\node at (0.47,-0.8) {\small{$3$}};
			\node at (1.05,-1) {\small{$4$}};
			\node at (1.58,-1) {\small{$2$}};
			\node at (2.17,-1) {\small{$3$}};			
			\node at (-0.12,0) {\textcolor{cyan}{\small{$0$}}};
			\node at (0.9,0) {\textcolor{cyan}{\small{$0$}}};
			\node at (1.37,0) {\textcolor{cyan}{\small{$0$}}};
			\node at (2.38,0) {\textcolor{cyan}{\small{$0$}}};
			\node at (0.33,0) {\small{$0$}};
			\node at (1.84,0) {\small{$0$}};
		\end{tikzpicture}
		\subcaption{Reduced graph $B'$}
		\label{34_fig:initialdual}
	   \end{subfigure}
	   \hfill
		\begin{subfigure}[b]{0.24\textwidth}
		\begin{tikzpicture}[scale = 1.00]
			\node (a) at (0.375,1.4) [Vertex_34,label=right:\textcolor{green}{\small{$4$}}] {};
			\node (b) at (1.875,1.4) [Vertex_34,label=left:\textcolor{green}{\small{$3$}}] {};
			\node (c) at (0,0.4) [Vertex_34] {};
			\node (d) at (0.75,0.4) [Vertex_34] {}
				edge [Edge_34] (a);
			\node (e) at (1.5,0.4) [Vertex_34] {}
				edge [EdgeMM_34] (a)
				edge [Edge_34] (b);
			\node (f) at (2.25,0.4) [Vertex_34] {}
				edge [EdgeMM_34] (b);
			\node at (0.47,0.8) {\small{$3$}};
			\node at (1.05,1) {\textcolor{cyan}{\small{$4$}}};
			\node at (1.58,1) {\small{$2$}};
			\node at (2.17,1) {\textcolor{cyan}{\small{$3$}}};
			
			\node (ac) at (0.375,-1.4) [Vertex_34,label=right:\textcolor{green}{\small{$4$}}] {}
				edge [Edge_34, bend angle=12, bend left] (a);
			\node (bc) at (1.875,-1.4) [Vertex_34,label=left:\textcolor{green}{\small{$3$}}] {}
				edge [Edge_34, bend angle=7, bend right] (b);
			\node (cc) at (0,-0.4) [Vertex_34] {}
				edge [EdgeM_34] (c);
			\node (dc) at (0.75,-0.4) [Vertex_34] {}
				edge [Edge_34] (ac)
				edge [EdgeM_34] (d);
			\node (ec) at (1.5,-0.4) [Vertex_34] {}
				edge [EdgeM_34] (ac)
				edge [Edge_34] (bc)
				edge [Edge_34] (e);
			\node (fc) at (2.25,-0.4) [Vertex_34] {}
				edge [EdgeM_34] (bc)
				edge [Edge_34] (f);
			\node at (0.47,-0.8) {\small{$3$}};
			\node at (1.05,-1) {\textcolor{cyan}{\small{$4$}}};
			\node at (1.58,-1) {\small{$2$}};
			\node at (2.17,-1) {\textcolor{cyan}{\small{$3$}}};			
			\node at (-0.12,0) {\textcolor{cyan}{\small{$0$}}};
			\node at (0.9,0) {\textcolor{cyan}{\small{$0$}}};
			\node at (1.37,0) {\textcolor{black}{\small{$0$}}};
			\node at (2.38,0) {\textcolor{black}{\small{$0$}}};
			\node at (0.33,0) {\small{$0$}};
			\node at (1.84,0) {\small{$0$}};
		\end{tikzpicture}
		\subcaption{MWMs $M',M$}
		\label{34_fig:mwmM}
	   \end{subfigure}
	   \hfill
		\begin{subfigure}[b]{0.24\textwidth}
		\begin{tikzpicture}[scale = 1.00]
			\node (a) at (0.375,1.4) [Vertex_34,label=right:\textcolor{green}{\small{$4$}}] {};
			\node (b) at (1.875,1.4) [Vertex_34,label=left:\textcolor{green}{\small{$3$}}] {};
			\node (c) at (0,0.4) [Vertex_34] {};
			\node (d) at (0.75,0.4) [Vertex_34] {}
				edge [Edge_34] (a);
			\node (e) at (1.5,0.4) [Vertex_34] {}
				edge [EdgeM_34] (a)
				edge [Edge_34] (b);
			\node at (0.47,0.8) {\small{$3$}};
			\node at (1.05,1) {\textcolor{cyan}{\small{$4$}}};
			\node at (1.58,1) {\small{$2$}};
			
			\node (ac) at (0.375,-1.4) [Vertex_34,label=right:\textcolor{green}{\small{$4$}}] {}
				edge [Edge_34, bend angle=12, bend left] (a);
			\node (bc) at (1.875,-1.4) [Vertex_34,label=left:\textcolor{green}{\small{$3$}}] {}
				edge [Edge_34, bend angle=7, bend right] (b);
			\node (cc) at (0,-0.4) [Vertex_34] {}
				edge [EdgeM_34] (c);
			\node (dc) at (0.75,-0.4) [Vertex_34] {}
				edge [Edge_34] (ac)
				edge [EdgeM_34] (d);
			\node (ec) at (1.5,-0.4) [Vertex_34] {}
				edge [EdgeM_34] (ac)
				edge [Edge_34] (bc)
				edge [Edge_34] (e);
			\node at (0.47,-0.8) {\small{$3$}};
			\node at (1.05,-1) {\textcolor{cyan}{\small{$4$}}};
			\node at (1.58,-1) {\small{$2$}};
			\node at (-0.12,0) {\textcolor{cyan}{\small{$0$}}};
			\node at (0.9,0) {\textcolor{cyan}{\small{$0$}}};
			\node at (1.37,0) {\textcolor{black}{\small{$0$}}};
			\node at (0.33,0) {\small{$0$}};
			\node at (1.84,0) {\small{$0$}};
		\end{tikzpicture}
		\subcaption{$B'_4$ with $M_4'$}
		\label{34_fig:mwmMj}
	   \end{subfigure}
	\caption{Weighted bipartite graph $B$~(\subref{34_fig:bipartiteG});
	reduced graph $B'$ with initial duals in green (vertices without label have dual value 0) and initial matching $M''$ in blue~(\subref{34_fig:initialdual});
	\MWM $M'$ of $B'$ in blue, $M$ of $B$ in thick blue~\subref{34_fig:mwmM};
	$B'_4$ with matching $M_4'$ in blue~(\subref{34_fig:mwmMj}). - cf. proofs of Lemma \ref{34_lem:mwmtime} and Lemma \ref{34_lem:similartime}.}
	\label{34_fig:mwm}
\end{figure}

The Hungarian method is a primal-dual algorithm.
It starts with an empty matching and computes a new matching with one more edge in each iteration, maintaining a feasible dual solution of a primal linear program.
The complementary slackness theorem ensures, that the obtained perfect matching after $n$ iterations is a \MWPM. Note, by using the reduction in~\cite{GT1989}, we always have at least one perfect matching.
\begin{lemma}\label{34_lem:mwmtime}
Let $B=(V\uplus U,E)$ be a bipartite graph with edge weights $\wMatch:E\to\mathbb R$. Let $k:=|V|$, $l:=|U|,$ and $k\leq l$.
An \MWM $M$ on $B$ can be computed in time $\mathcal O(kl(k+\log l))$.
\end{lemma}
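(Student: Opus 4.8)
The goal is to compute an \MWM on $B=(V\uplus U,E)$ with $|V|=k\le l=|U|$ in time $\mathcal O(kl(k+\log l))$. The plan is to reduce the \MWM problem to an \MWPM problem on a suitably padded balanced bipartite graph, and then run the Hungarian method $\mathcal A_{\rm PM}$, but to analyze its cost carefully by exploiting that the number of iterations of the outer loop is governed by $k$ rather than by the total number of vertices.

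\textbf{Step 1: Reduction to a perfect-matching instance.} First I would apply the Gabow--Tarjan reduction~\cite{GT1989} to turn the \MWM instance into an \MWPM instance without changing the asymptotic running time. Concretely, I would build a balanced graph $B'$ by adding $k$ dummy vertices on the $U$-side together with zero-weight edges so that every original vertex can always be saturated, guaranteeing (as the text notes) that at least one perfect matching exists. The key structural observation to record here is that the smaller side still has effective size $\mathcal O(k)$: a maximum weight matching of $B$ uses at most $k$ edges, so in the padded instance only $\mathcal O(k)$ of the matching edges carry real weight and the remaining matched pairs are dummy/zero-weight pairs.

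\textbf{Step 2: Bounding the iterations and per-iteration cost.} The Hungarian method computes a \MWPM by performing one augmentation per iteration, increasing the matching size by one. The crucial point is that although $B'$ has $\Theta(l)$ vertices, only $\mathcal O(k)$ augmenting searches need to do nontrivial work, because the dummy padding can be matched trivially (the zero-weight dummy edges let us initialize a partial matching that already saturates the padding side). Hence the outer loop effectively runs $\mathcal O(k)$ substantive iterations. Each iteration performs a shortest-augmenting-path search (Dijkstra with reduced costs) on a graph with $\mathcal O(kl)$ edges and $\mathcal O(l)$ vertices on the large side; using Fibonacci heaps this costs $\mathcal O(m + n\log n) = \mathcal O(kl + l\log l)$ per iteration. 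Multiplying by the $\mathcal O(k)$ iterations gives $\mathcal O(k(kl + l\log l)) = \mathcal O(kl(k+\log l))$, which is the claimed bound.

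\textbf{Expected main obstacle.} The delicate part is justifying that the number of \emph{nontrivial} iterations is $\mathcal O(k)$ rather than $\mathcal O(l)$, and that the per-iteration heap operations can be charged as $\mathcal O(kl + l\log l)$ rather than the naive $\mathcal O(l^2)$. I would make this precise by initializing the duals so that the dummy vertices and their zero-weight edges are immediately tight and matched, leaving only the $k$ vertices of $V$ to be processed by augmenting-path searches; Figure~\ref{34_fig:mwm} illustrates exactly this construction of $B'$ with its initial duals and initial matching $M''$. The analysis then reduces to bounding a single-source shortest-path computation on the reduced graph, where the $\log l$ term comes from the $l$ heap extractions and the $kl$ term from relaxing all edges. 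The remaining routine check is that the final perfect matching, restricted to the original edges, is a genuine \MWM of $B$, which follows from complementary slackness together with the zero weight of the added edges.
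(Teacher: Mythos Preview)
Your high-level plan coincides with the paper's: reduce \MWM to \MWPM, arrange initial feasible duals and a large partial matching so that only $k$ augmenting iterations remain, and charge each iteration $\mathcal O(kl+l\log l)$. The gap is in your concrete realization of Step~1. You write that $B'$ is obtained by ``adding $k$ dummy vertices on the $U$-side,'' but $U$ is already the larger side ($|U|=l\ge k$), so this increases rather than removes the imbalance and does not produce a balanced instance admitting a perfect matching. Moreover, a one-sided padding that guarantees a perfect matching would require connecting $\Theta(l)$ dummies to all vertices of the opposite side, introducing $\Theta(l^2)$ edges and invalidating the $\mathcal O(kl)$-edge count you rely on in Step~2.

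The construction that actually works---and the one shown in the Figure~\ref{34_fig:mwm} you invoke---is the Gabow--Tarjan \emph{doubling}: after deleting all negative-weight edges, take a disjoint copy $B^{\rm C}$ of $B$ and join each vertex $v\in V\cup U$ to its copy $v^{\rm C}$ by a zero-weight edge. The resulting $B'$ is balanced on $2(k+l)$ vertices with only $\mathcal O(kl)$ edges. Setting $d(u)=d(u^{\rm C})=0$ for $u\in U$ and $d(v)=d(v^{\rm C})=\max_{u}\wMatch(vu)$ for $v\in V$ gives a feasible dual (the prior removal of negative edges ensures $d(v)\ge 0$, so the zero-weight edges $vv^{\rm C}$ are covered), and $M''=\{uu^{\rm C}:u\in U\}$ is a tight matching of size $l$. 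Since a perfect matching in $B'$ has $k+l$ edges, exactly $k$ further iterations of $\mathcal A_{\rm PM}$ are needed, and then your Step~2 arithmetic yields the claimed bound.
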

\begin{proof}
Let $\{v_1,\ldots,v_k\}=V, \{u_1, \ldots,u_l\}=U$ be the two vertex sets of $B$.
First, we remove all edges from $B$ with negative edge weight, because they never contribute to an \MWM.
Then, we add a copy $B^{\rm C}$ of $B$ to the graph.
For each vertex $v\in V\uplus U$ we denote its copy $v^{\rm C}$ and for each edge $e\in E$ we denote its copy $e^{\rm C}$.
We then copy the edge weights, i.e., $\wMatch(e^{\rm C}):=\wMatch(e)$ for each edge $e\in E$.
Next we insert a new edge of weight 0 between each vertex $v\in V\uplus U$ and its copy $v^{\rm C}$.
This graph is called \emph{reduced graph} $B'$.
Figures~\ref{34_fig:mwm}\subref{34_fig:bipartiteG} and~\subref{34_fig:initialdual} show an example of $B$ and $B'$.
An \MWPM $M'$ of $B'$ yields an \MWM $M$ of $B$: $vu\in M \Leftrightarrow vu\in M'$ and $v\in V, u\in U$. This follows from the construction of $B'$.

In the following, we prove an upper time bound to compute $M'$.
An initial feasible dual solution $d:V_{B'}\to\mathbb R$, i.e., $d(v)+d(u)\geq \wMatch(vu)$ for all edges $vu\in E_{B'}$, including $l$ matching edges $vu$ with $d(v)+d(u)= \wMatch(vu)$, is computed as follows:
We set $d(u)=0$ for all $u\in U$ and $d(v):=\max\{\wMatch(vu) \mid u\in U\}$ for all $v\in V$.
Next, for each $v\in V\uplus U$ the vertex $v^{\rm C}$ obtains the dual value $d(v^{\rm C}):=d(v)$.
We define an initial matching $M'':=\{uu^{\rm C}\mid u\in U\}$. Note, $d(u)+d(u^{\rm C})=0=\wMatch(uu^{\rm C})$. 

The dual solution $d$ is feasible and can be computed in time $\mathcal O(kl)$.
Let $n:=|V_{B'}|=2(k+l)\in \Theta(l)$ and $m:=|E_{B'}|\leq 2kl+k+l\in \mathcal O(kl).$
Increasing the number of matching edges by one using a single iteration of $\mathcal A_{\rm PM}$ is possible in time $\mathcal O(m+n\log n)=\mathcal O(l(k+\log l))$.
To obtain an \MWPM $M'$ form $M''$ in $B'$ we need to increase the number of matching edges by $k$, therefore the time to compute $M'$ and thus $M$ is $\mathcal O(kl(k+\log l))$.
\end{proof}

\begin{algorithm}[t]
  \caption{Computing MWMs on $B$ and $B_j$, cf. Lemmas \ref{34_lem:mwmtime}, \ref{34_lem:similartime} }
  \label{34_alg:lemma5_6}
  \Input{Bipartite graph $B=(V\uplus U,E), |U| \geq 2, U=\{u_1,u_2,\ldots\}$ with edge weights $\wMatch:E\to\mathbb R$}
  \Output{MWMs $M,M_j$ on $B, B_j:=G[V\uplus U\setminus \{u_j\}]$ for each $j\in [1..|U|]$.}
  \uIf(\note*[f]{Compute MWM $M$ of $B$}){$|V| \leq |U|$} {
	Let $B':=(V',E')$, where $V':=V\cup U \cup \{v^{\rm C}\mid v\in V\cup U\}$ and $E':=E\cup \{e^{\rm C}\mid e\in E\}\cup \{vv^{\rm C}\mid v\in V\cup U\}$.\;\label{34_alg:lemma5_6:startif}
	$\wMatch(e^{\rm C})\gets\wMatch(e)$ for all $e\in E$\note*[f]{Weights of additional edges}\\
	$\wMatch(vv^{\rm C})\gets 0$ for all $v\in V\cup U$\;
	$d(u^{\rm C})\gets d(u)\gets 0$ for all $u \in U$ \note*[f]{Dual values}\\
	$d(v^{\rm C})\gets d(v)\gets\max\{\wMatch(vu) \mid u\in U\}$ for all $v\in V$\;
	$M''\gets\{uu^{\rm C}\mid u\in U\}$ \note*[f]{Initial matching edges}\\
	Starting with $M''$ and $d$, compute an MWPM $M'$ on $B'$ using $|V|$ iterations of $\mathcal A_{\rm PM}$\;
	$M\gets \{vu\mid vu\in M', v\in V, u\in U\}$ \; \label{34_alg:lemma5_6:endif}
  } \Else { 
    Exchange the vertices of $V$ and $U$.\;\label{34_alg:lemma5_6:exchange1}
    	Compute $M$ as in lines \ref{34_alg:lemma5_6:startif} to \ref{34_alg:lemma5_6:endif} and exchange $V$ and $U$ back.\label{34_alg:lemma5_6:exchange2}
  }	
  $d\gets$ The dual values obtained while computing $M'$.\\
 \ForEach(\note*[f]{MWMs $M_j$ on $B_j$}){$j \in [1..|U|]$} {
   \uIf{$u_j$ is not matched by $M$}{
     $M_j\gets M$
   }\Else { 
     $B'_j\gets B'\setminus \{u_j,u_j^{\rm C}\}$\\
     $M_j'\gets M'$ without the matching edges incident to $u_j,u_j^{\rm C}$\note*[f]{Initial matching}\\
     Compute an MWPM $M_j'$ on $B'_j$ using $d$ and a single iteration of $\mathcal A_{\rm PM}$.\\
     $M_j\gets \{vu\mid vu\in M_j', v\in V, u\in U\}$ \\
   }
 }
\end{algorithm}

\begin{lemma}\label{34_lem:similartime}
Let $B=(V\uplus U,E)$ be a weighted bipartite graph with $k:=|V|, U=\{u_1,\ldots,u_l\}, l \geq 2$. Let $B_j:=G[V\uplus U\setminus \{u_j\}]$ for each $j\in [1..l]$.
Computing \MWMs for all graphs $B,B_1,\ldots,B_l$ is possible in total time 
$\mathcal O(kl(\min\{k,l\}+\log \max\{k,l\}))$.
\end{lemma}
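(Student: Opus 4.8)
The plan is to follow Algorithm~\ref{34_alg:lemma5_6}: compute a single \MWM of $B$ once, and then argue that every graph $B_j$ can be obtained from the already computed solution with at most one additional augmenting step of $\mathcal A_{\rm PM}$, reusing the reduced graph and its dual solution. First I would compute an \MWM $M$ of $B$ via Lemma~\ref{34_lem:mwmtime}; if $k>l$ I swap the roles of $V$ and $U$ so the smaller side plays the role of $V$, which gives the symmetric bound $\mathcal O(kl(\min\{k,l\}+\log\max\{k,l\}))$. This run also yields the reduced graph $B'$, an \MWPM $M'$ of $B'$ with $M=\{vu\in M'\mid v\in V, u\in U\}$, and a feasible dual $d$ on $B'$ under which every edge of $M'$ is tight. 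I keep $B'$, $M'$, and $d$ for reuse.

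Next, fix $j\in[1..l]$ and note that the reduced graph of $B_j$ is exactly $B'_j:=B'\setminus\{u_j,u_j^{\rm C}\}$, so by the correctness argument of Lemma~\ref{34_lem:mwmtime} any \MWPM of $B'_j$ restricts to an \MWM of $B_j$. I distinguish two cases. If $u_j$ is unmatched by $M$, then in the perfect matching $M'$ it can only be matched to $u_j^{\rm C}$ through the weight-$0$ vertical edge; deleting both vertices removes a single matching edge and leaves $M'\setminus\{u_ju_j^{\rm C}\}$ perfect on $B'_j$, so I may set $M_j=M$ (correct also directly, since a heavier matching of $B_j\subseteq B$ would contradict optimality of $M$). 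If $u_j$ is matched by $M$, then $u_j$ is matched to some $v\in V$ and $u_j^{\rm C}$ is matched to a distinct vertex; deleting $u_j,u_j^{\rm C}$ then removes exactly two edges of $M'$ and leaves exactly two unmatched vertices, one on each side of $B'_j$, so the restriction $M'_j$ falls short of a perfect matching by exactly one edge.

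The crucial point, which I expect to be the main thing to get right, is that nothing is wasted in reusing the duals: $d$ remains feasible on $B'_j$ because feasibility is a per-edge condition and $B'_j$ carries only a subset of the edges of $B'$, and every edge of $M'_j$ is still tight. Hence $(M'_j,d)$ is a legitimate intermediate state of $\mathcal A_{\rm PM}$, and a single iteration augments $M'_j$ by one edge to a perfect, tight matching; by complementary slackness this is an \MWPM of $B'_j$, and extracting its original edges gives $M_j$. The whole economy of the argument rests on establishing that exactly one augmenting iteration is needed per matched $u_j$ and that the carried-over dual stays feasible.

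Finally I would bound the running time. One iteration of $\mathcal A_{\rm PM}$ on $B'_j$ costs $\mathcal O(m+n\log n)=\mathcal O(kl+\max\{k,l\}\log\max\{k,l\})$, since $B'_j$ has $\Theta(\max\{k,l\})$ vertices and $\mathcal O(kl)$ edges. A real iteration is performed only for indices $j$ with $u_j$ matched by $M$, whose number is $|M|\le\min\{k,l\}$. Multiplying these two quantities and using $\min\{k,l\}\cdot\max\{k,l\}=kl$ gives $\mathcal O(kl(\min\{k,l\}+\log\max\{k,l\}))$ for handling all the $B_j$, which together with the cost of the initial \MWM on $B$ yields the claimed total bound.
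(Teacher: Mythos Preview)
Your proposal is correct and follows essentially the same approach as the paper: compute the \MWM of $B$ once via Lemma~\ref{34_lem:mwmtime}, reuse the resulting reduced graph $B'$, perfect matching $M'$, and dual $d$; for each $j$ either set $M_j=M$ when $u_j$ is unmatched, or delete $u_j,u_j^{\rm C}$ from $B'$ and run a single iteration of $\mathcal A_{\rm PM}$ from the restricted matching and dual; then bound the number of nontrivial indices by $|M|\le\min\{k,l\}$. Your write-up even spells out a couple of points the paper leaves implicit (that dual feasibility is preserved under edge deletion, and the direct optimality argument for the unmatched case), but the decomposition, the key observation, and the running-time accounting are the same.
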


\begin{proof}
According to Lemma \ref{34_lem:mwmtime} we obtain an \MWM $M$ of $B$ in time $\mathcal O(kl(\min\{k,l\}+\log \max\{k,l\}))$.
We compute an \MWM on each $B_j$ as follows:
Let $d$ be an optimal dual solution obtained while computing $M'$ (on $B'$, see proof of Lemma  \ref{34_lem:mwmtime}).
If $u_j$ is not matched by $M$, i.e., $u_j\notin e$ for all $e\in M$, then $M_j:=M$ is an \MWM of $B_j$.
Otherwise let $B'_j$ be the reduced graph as explained in the proof of Lemma \ref{34_lem:mwmtime}.
We obtain a feasible dual solution $d_j$ on the bipartite graph $B'_j$ by taking the dual values from $d$, i.e., $d_j(v):=d(v)$ for all $v\in V(B'_j)$.
Note, we have $2(k+l)$ vertices in $B'$, and exactly two less in $B'_j$, i.e., a perfect matching in $B'_j$ consists of $k+l-1$ matching edges.

We can derive an initial matching $M_j'$ on $B'_j$ with $k+l-2$ edges from $M'$;
$M'_j$ contains the matching edges that are not incident to the two removed vertices from $B'$ to $B_j'$.
Therefore only one more iteration of $\mathcal A_{\rm PM}$ is needed, which is possible in time $\mathcal O(\max\{k,l\}(\min\{k,l\}+\log \max\{k,l\}))$, cf. proof of Lemma \ref{34_lem:mwmtime}.
We then obtain $M_j$ from $M_j'$ as previously described.
The complementary slackness conditions ensure $M_j'$ and therefore $M_j$ is of maximum weight.
An example of $M'$ and $B'_j\ (j=4)$ is shown in Figures~\ref{34_fig:mwm}\subref{34_fig:mwmM} and~\subref{34_fig:mwmMj}.

We need to compute an \MWM different from $M$ for at most $\min\{k,l\}$ of the $l$ graphs $B_1,\ldots,B_l$, because at most $k$ vertices of $U$ are matched by $M$, cf. Figure~\ref{34_fig:mwm}\subref{34_fig:mwmM}: only $u_3$ and $u_4$ of $U$ are matched by $M$.
Therefore the time bound to compute \MWMs for all the graphs $B_1,\ldots,B_l$ is $\mathcal O(\min\{k,l\} \max\{k,l\}(\min\{k,l\}+\log \max\{k,l\}))=\mathcal O(kl(\min\{k,l\}+\log \max\{k,l\}))$.
\end{proof}

We call $B$ and the graphs $B_j,j\in [1..l]$,  a set of ``similar'' bipartite graph instances.
Algorithm~\ref{34_alg:lemma5_6} shows how we compute an MWM for each graph in this set.
Next, we apply Lemma~\ref{34_lem:similartime} to Algorithm~\ref{34_alg:treemcs}. For each pair $u\in V_G,v\in V_H$ of vertices, selected in line \ref{34_alg:treemcs:post} and \ref{34_alg:treemcs:nodev}, respectively, the algorithm computes up to $|N(v)|+1$ \MWMs, cf. lines \ref{34_alg:treemcs:nodes}, \ref{34_alg:treemcs:compMWM}. A close look at Algorithm~\ref{34_alg:treemcs} reveals this as a set of ``similar'' bipartite graph instances. The first graph $B$ is obtained by selecting $s=v$ in line \ref{34_alg:treemcs:nodes}. The other graphs $B_j$ are obtained by selecting all the vertices $s\in N(v)$. 
This observation allows to prove the following theorem.

\begin{theorem}
\label{34_th:runtime:allmwm}
All the \MWMs in Algorithm~\ref{34_alg:treemcs} can be computed in total time 
\\$\mathcal{O}(kl (\min \{\Delta(G),\Delta(H)\}+\log\max \{\Delta(G),\Delta(H)\}))$, where $k=|G|$ and $l=|H|$.
\end{theorem}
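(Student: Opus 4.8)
The plan is to show that the triple loop of Algorithm~\ref{34_alg:treemcs} decomposes, for each fixed pair $(u,v)\in V_G\times V_H$, into exactly one family of ``similar'' instances in the sense of Lemma~\ref{34_lem:similartime}, and then to sum the per-pair bound of that lemma over all pairs using elementary degree and handshake estimates.

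First I would fix $u\in V_G$ (chosen in line~\ref{34_alg:treemcs:post}) and $v\in V_H$ (line~\ref{34_alg:treemcs:nodev}) and examine the inner loop over $s\in N(v)\cup\{v\}$. The $G$-side of every matching instance built in line~\ref{34_alg:treemcs:compMWM} is $U:=C(u)$ in $G^r$, whose size $k_u:=|C(u)|$ does not depend on $s$. Crucially, by line~\ref{34_alg:treemcs:DP} the weight of the edge between a child $u'$ of $u$ and a neighbor $v'$ of $v$ equals $\wIso(uu',vv')+D(u',v,v')$, which is independent of $s$ as well. The $H$-side is $C(v)$ in $H^s$: for $s=v$ it is all of $N(v)$ (of size $l_v:=|N(v)|$), and for each $s\in N(v)$ it is $N(v)\setminus\{s\}$. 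Hence the $l_v+1$ instances computed for this pair are precisely the base graph $B$ (with $U=N(v)$ and the other side $C(u)$) together with the graphs $B_j$ obtained by deleting a single $N(v)$-vertex, matching the setup of Lemma~\ref{34_lem:similartime} after identifying its vertex set $U$ with $N(v)$. (The degenerate cases $l_v\le 1$ involve at most two instances and are absorbed in the same bound.) By that lemma all these \MWMs are computed in total time $\mathcal{O}(k_u l_v(\min\{k_u,l_v\}+\log\max\{k_u,l_v\}))$.

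Summing this bound over all $u\in V_G$ and $v\in V_H$, I would use that $k_u\le\Delta(G)$ (a non-root vertex has $\delta(u)-1$ children, the root $\delta(r)$) and $l_v\le\Delta(H)$, so $\min\{k_u,l_v\}\le\min\{\Delta(G),\Delta(H)\}$ and $\log\max\{k_u,l_v\}\le\log\max\{\Delta(G),\Delta(H)\}$. Pulling these two maxima out of the double sum leaves the factor $\sum_{u\in V_G}k_u\cdot\sum_{v\in V_H}l_v=(|G|-1)\cdot 2(|H|-1)$, since each non-root vertex is the child of exactly one vertex and $\sum_{v\in V_H}\delta(v)=2|E_H|$. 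This yields the claimed total $\mathcal{O}(kl(\min\{\Delta(G),\Delta(H)\}+\log\max\{\Delta(G),\Delta(H)\}))$ with $k=|G|$ and $l=|H|$.

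The main obstacle is the first step: verifying that the loop over $s$ really produces the induced-subgraph family of Lemma~\ref{34_lem:similartime} rather than unrelated instances---in particular that the edge weights are $s$-independent so that the $B_j$ are genuine vertex-deleted subgraphs of a common $B$. This is exactly what lets the batch computation of Lemma~\ref{34_lem:similartime} replace the wasteful per-$s$ summation underlying the $\mathcal{O}(n^4)$ bound of Proposition~\ref{34_prop:unrooted}; once it is established, the remaining summation is routine because the degree caps decouple the $\min$ and $\log$ terms from the handshake sums.
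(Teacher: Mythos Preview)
Your proposal is correct and follows essentially the same route as the paper: group the inner loop over $s$ into one ``similar'' family per pair $(u,v)$, apply Lemma~\ref{34_lem:similartime} to get $\mathcal{O}(k_u l_v(\min\{k_u,l_v\}+\log\max\{k_u,l_v\}))$ per pair, bound the $\min$ and $\log$ terms by the global degree quantities, and finish with the handshake sums. Your explicit verification that the edge weights $\wIso(uu',vv')+D(u',v,v')$ are $s$-independent---so that the $B_j$ really are vertex-deleted subgraphs of a common $B$---is precisely the observation the paper makes in the paragraph preceding the theorem, and your handshake accounting ($\sum_u k_u=|G|-1$, $\sum_v l_v=2(|H|-1)$) is slightly tighter than the paper's use of $\delta(v)$ in place of $|C(v)|$, but this changes nothing asymptotically.
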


\begin{proof}
For each pair $(v,u)\in V_{G}\times V_H$ we compute an \MWM on each of the ``similar'' graphs, where $B=(C(v)\uplus N(u),E)$ and edge weights as determined by Eq.~\eqref{34_eq:basic_recursion}.
Let $d_{\min}:=\min \{\Delta(G),\Delta(H)\}$ and $d_{\max}:=\max \{\Delta(G),\Delta(H)\}$.
For all the pairs $(v,u)$ we obtain a time complexity of
\begin{align*}
&\mathcal O\left(\sum_v\sum_u \delta(v) \delta(u)(\min \{\delta(v),\delta(u)\}+\log\max \{\delta(v),\delta(u)\})\right)\\
\subseteq\ &\mathcal O\left(\sum_v \delta(v) \sum_u \delta(u)(d_{\min}+\log d_{\max})\right)\\
=\ &\mathcal O\left((d_{\min}+\log d_{\max}) \sum_v \delta(v) l\right)\\ 
=\ &\mathcal O((d_{\min}+\log d_{\max}) kl).
\end{align*} 
\end{proof}

\section{Lower Bounds on the Time Complexity and Optimality}
\label{34_sec:lowerbound}
Providing a tight lower bound on the time complexity of a problem is generally a 
non-trivial task.
We obtain this for trees of bounded degree and reason why the existence of an
algorithm with subcubic running time for unrestricted trees is unlikely.
In order to solve MCST with an arbitrary weight function $\wIso$ for two trees 
$G$ and $H$, all values $\wIso(u,v)$ for $u \in V(G)$ and $v \in V(H)$ must be
considered. This directly leads the lower bound of $\Omega(|G||H|)$ for the time
complexity of MCST.
For trees of bounded degree our approach achieves running time 
$\mathcal{O}(|G||H|)$ according to Theorem~\ref{34_th:runtime:mcsi} and, thus, 
has an optimal worst-case running time in the considered setting.

For unrestricted trees of order $n$ our approach has a running time of 
$\mathcal O(n^3)$ according to Theorem~\ref{34_th:runtime:mcsi}.
In the next paragraph we present a linear time reduction from the assignment problem to MCST, which preserves the time complexity. Therefore solving MCST in time $o(n^3)$ yields an algorithm to solve the assignment problem in time $o(n^3)$.
The Hungarian method solves the assignment problem in $\mathcal O(n^3)$, which is the best known time bound for bipartite graphs with $\Theta(n^2)$ edges of unrestricted weight for more than 30 years. 

Let $B=(U \uplus V, E, \wMatch)$ be a weighted bipartite graph on which we want to solve the assignment problem, i.e., to find an \MWPM. 
We assume weights to be non-negative, which can be achieved by adding a sufficiently large constant to every edge weight to obtain an assignment problem
that is equivalent w.r.t.\ the \MWPMs.
We construct a star graph $G$ with center $c$ and leaves $U$ and another star
graph $H$ with center $c'$ and leaves $V$.
Let $n=|U|=|V|$ and $N = \max_{e \in E} \wMatch(e)$.
We define $\wIso$ such that $\wIso(u,v)=\wMatch(uv)+nN$ for all $uv\in E$, $\wIso(c,c')=nN$ and $\wIso(u,v)=-\infty$ for all other pairs of vertices.
For all pairs of edges we define $\wIso(e,e')=0$.
Let $\phi$ be an \MCSI between $G$ and $H$ w.r.t.\ $\wMatch$ and $p:=|\dom(\phi)|$.
It directly follows from the construction that 
$M := \{uv \in E \mid \phi(u) = v \}$ is an \MWM in $B$ with 
$\WMatch(M) = \WIso(\phi) - pnN$.
Furthermore, the incremented weights ensure that $M$ is perfect, i.e., $p = n+1$, whenever $B$ admits a perfect matching.
Therefore we obtain:

\begin{proposition}
\label{34_th:lowerbound:unrestricted}
Only if we can solve the assignment problem on a graph with $n$ vertices and $\Theta(n^2)$ edges of unrestricted weight in time $o(n^3)$, we can solve MCST
on two unrooted trees of order $\Theta(n)$ in time $o(n^3)$.
\end{proposition}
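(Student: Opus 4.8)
The plan is to treat the construction that precedes the statement as a reduction and to verify two things: that it can be carried out, together with reading off the answer, in $o(n^3)$ time, and that an optimal \MCSI of the two stars encodes a solution to the assignment problem. Granting these, an $o(n^3)$ algorithm $\mathcal A$ for MCST would yield an $o(n^3)$ algorithm for the assignment problem on a bipartite graph $B=(U\uplus V,E,\wMatch)$ with $|U|=|V|=n$ and $\Theta(n^2)$ edges, establishing the claimed ``only if''.

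First I would bound the overhead. Shifting the weights to be positive, constructing the stars $G$ and $H$, and defining $\wIso$ each inspect every vertex and edge of $B$ a constant number of times, so the reduction costs $\mathcal O(|E|)=\mathcal O(n^2)\subseteq o(n^3)$; reading the matching $M:=\{uv\in E\mid \phi(u)=v\}$ off a returned isomorphism $\phi$ is likewise $\mathcal O(n^2)$. Since $G$ and $H$ are stars on $n+1$ vertices, they are trees of order $\Theta(n)$ and $\mathcal A$ applies, running in $o(n^3)$.

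Next I would argue correctness, which is the heart of the proof. A common subtree of two stars is connected, and in a star the centre is the only vertex of degree exceeding one; combined with the $-\infty$ weights assigned to every pair other than $(c,c')$ and the edge pairs $(u,v)$ with $uv\in E$, this forces an optimal $\phi$ to map $c\mapsto c'$ and to map leaves only along edges of $B$, so the mapped leaf pairs form a matching $M$ and $\WIso(\phi)=p\,nN+\WMatch(M)$ with $p=1+|M|$, exactly as recorded before the statement. The purpose of the summand $nN$ is to render the objective lexicographic: because a matching has at most $n$ edges of weight at most $N$, its weight $\WMatch(M)$ is at most $nN$ and drops below $nN$ as soon as $M$ is not perfect, so enlarging $M$ by one edge always increases $\WIso(\phi)$. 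Hence an \MCSI first maximises $|M|$ and only then $\WMatch(M)$, and whenever $B$ admits a perfect matching the optimum has $p=n+1$ with $M$ a perfect matching of maximum weight, i.e.\ an \MWPM; a non-perfect optimum ($p<n+1$) certifies that $B$ has no perfect matching.

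The step I expect to require the most care is this lexicographic argument: one must pin down that the multiplier $nN$ genuinely separates the cardinality term from the weight term for every pair of competing isomorphisms, handling the boundary where $M$ is nearly perfect, and must dispose of degenerate inputs (for example by shifting so that weights are strictly positive, which also rules out the vacuous case $N=0$, and treating $n=1$ separately). Once this is settled, the three $o(n^3)$ components---reduction, one invocation of $\mathcal A$, and read-off---compose to an $o(n^3)$ assignment solver, contradicting that $\mathcal O(n^3)$ has stood as the best bound for graphs with $\Theta(n^2)$ unrestricted-weight edges for over thirty years, which is exactly the contrapositive form of the statement.
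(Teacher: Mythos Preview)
Your proposal is correct and follows essentially the same route as the paper: the proof in the paper is exactly the reduction paragraph preceding the proposition (stars on $U$ and $V$, the $+nN$ shift, and the observation that an \MCSI yields an \MWM that must be perfect whenever $B$ has a perfect matching). You spell out in more detail than the paper does why the centres must be mapped to each other and why the $nN$ summand enforces a lexicographic objective, but the construction, the key inequality, and the conclusion are identical.
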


\section{Output-Sensitive Algorithms for Listing All Solutions}
Algorithm~\ref{34_alg:treemcs} can easily be modified to not only output the weight
of an \MCSI, but also an associated isomorphism. Let $D(u,s,v)$ be a maximum entry in $D$. Then $\phi(u)=v$. Further mappings are defined by the matching edges occurring in Eq.~\eqref{34_eq:basic_recursion}. In the example of Figure~\ref{34_example:mcs-tree}\subref{34_example:mcs-tree:mwbmsol} we obtain $\phi(c_1)=d_1$ and $\phi(c_2)=d_3$.
Since in general there is no single unique \MCSI, it is of interest to find and 
list all of them.
In this section we show how our techniques can be combined with the enumeration
algorithm from~\cite{DHKM14}, which lists all the different \MCSIs of two trees exactly once. We obtain the best known time bound for listing all
solutions by an improved analysis.

Since the number of MCSIs is not polynomially bounded in the size of the input 
trees, we cannot expect polynomial running time. An algorithm is said to be
\emph{output-sensitive} if its running time depends on the size of the output in 
addition to the size of the input. 

The basic idea to enumerate all \MCSIs is to first compute the weight of an \MCSI.
Then for each maximum table entry $D(u,v,v), u\in V_G, v\in V_H$, all the different rooted \MCSIs on the rooted subtrees $G^r_u,H^v_v$ are listed.
Note, we omit maximum table entries $D(u,s,v)$, where $s\neq v$. We do this, because every \MCSI of $G^r_u,H^s_v$ is also an \MCSI of $G^r_u,H^v_v$.
As an example let $u$ be the root of $G$ in Figure \ref{34_example:mcs-tree}.
Then $D(u,v,v)=D(u,s,v)=7$. For both table entries we obtain the same \MCSI $\phi$ with $\phi(u)=v, \phi(r)=d_1, \phi(c_1)=d_2, \phi(c_2)=d_3,\ldots$.

We enumerate the \MCSIs on a pair of rooted subtrees by enumerating all \MWMs of the associated bipartite graphs of Eq.~\eqref{34_eq:basic_recursion}
 and then expanding $\phi$ recursively along all the different \MWMs of the mapped children.
For the problem depicted in Figure~\ref{34_example:mcs-tree}\subref{34_example:mcs-tree:mwbm} there are two different \MWMs:
 $M_1=\{c_1d_1,c_2d_3\}$ and $M_2=\{c_1d_2,c_2d_3\}$. 
Therefore we first expand along $M_1$ as explained in the first paragraph of this section and then along $M_2$. 
We do this recursively for each occurring matching instance. 
The enumerated isomorphisms of each maximum entry are pairwise different, based on the different \MWMs.
They are also pairwise different between two different maximum entries. The proof of the latter claim is similar to the proof of Lemma~\ref{34_lem:mcs_fast}.
Thus we do not enumerate an \MCSI twice.
Further we do not omit an \MCSI, because we consider all necessary maximum table entries and their rooted subtrees,
 as well as all possible expansions along the \MWMs.

Note, the enumeration algorithm of~\cite{DHKM14} uses a somewhat different table to store maximum solutions. The basic idea to list all solutions is the same.
For trees of sizes $k:=|G|$ and $l:=|H|$, $k\leq l$, their enumeration algorithm 
requires total time $\mathcal O(k^2l^4+Tl^2)$, where $T$ is the number of different \MCSIs.
The $\mathcal O(k^2l^4)$ term of the running time is caused by computing the weight of an \MCSI in time $\mathcal O(kl^4)$ and repeated deletions of single edges in one tree and recalculations of the weight of an \MCSI to avoid outputting an \MCSI twice.
We have improved the time bound to compute the weight of an \MCSI, cf.~Theorem~\ref{34_th:runtime:mcsi}.
Therefore we can improve the $\mathcal O(k^2l^4)$ term to $\mathcal O(kl(\min \{\Delta(G),\Delta(H)\}+\log\max \{\Delta(G),\Delta(H)\}))$.

The $\mathcal O(Tl^2)$ term in the original running time is caused by the enumeration of \MWMs.
For each \MCSI $\phi$ several \MWMs have to be enumerated, let this number be $m_{\phi}$.
The time to do this can be bounded by $\mathcal O(l^2)$, when using a variant of the enumeration algorithm for perfect matchings presented in~\cite{Uno97}.
The running time follows from the fact, that for each \MWM two depth first searches (DFS) in a directed subgraph of $B'$,  cf.~Figure~\ref{34_fig:mwm}, are computed.
The running time of DFS is linear in the number of edges and vertices. 
Let $k_i,l_i$ be the sizes of the disjoint vertex sets of the $i$-th bipartite graph, on which we enumerate the \MWMs, $i\in [1..m_{\phi}]$.
Then $\sum_i k_i\leq k$ and $\sum_i l_i\leq l$, because all the vertices in all the $m_{\phi}$ bipartite graphs are pairwise disjoint.
The running time of DFS in the directed subgraphs of the $i$-th bipartite graph is $\mathcal{O}(k_il_i)$, cf. Figure~\ref{34_fig:mwm}\subref{34_fig:initialdual} or~\subref{34_fig:mwmMj}.
For all $m_{\phi}$ DFS runs we have $\sum_i k_il_i\leq \sum_i k_i\Delta(H)\leq k\Delta(H)$ as well as $\sum_i k_il_i\leq \sum_i \Delta(G)l_i\leq \Delta(G) l$.
Hence, the time to enumerate $\phi$ is bounded by $\mathcal O(\min\{k\Delta(H),\Delta(G)l\})$.

Both improvements combined together, the initial computation of the weight of an \MCSI and   the \MWM enumeration, improve the enumeration time from $\mathcal{O}(n^6+Tn^2)$ to $\mathcal{O}(n^3+Tn\min\{\Delta(G),\Delta(H)\})$. More precisely we obtain the following theorem.

\begin{theorem}\label{34_th:runtime:allmcsi}
Enumerating all \MCSIs of two unrooted trees $G$ and $H$ is possible in time\\ 
$\mathcal{O}(\,|G|\,|H|\,(\,\min \{\Delta(G),\Delta(H)\}\,+\,\log\max \{\Delta(G),\Delta(H)\}\,)+T(\,\min\{|G|\Delta(H),\Delta(G)|H|\}\,)\,)$,\\
where $T$ is the number of different \MCSIs.
\end{theorem}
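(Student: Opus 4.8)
The plan is to establish the stated bound as a sum of two additive contributions: the one-time cost of computing the weight of an \MCSI, and the output-sensitive cost of the actual enumeration. For the first contribution I would invoke Theorem~\ref{34_th:runtime:mcsi} directly, which yields the term $\mathcal{O}(|G||H|(\min\{\Delta(G),\Delta(H)\}+\log\max\{\Delta(G),\Delta(H)\}))$ and already subsumes filling the table $D$. The remaining work is to bound the enumeration phase and to justify its correctness.

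For correctness I would fix the set of maximum table entries $D(u,v,v)$ with $u\in V_G$, $v\in V_H$ from which enumeration is launched, and argue—by reasoning that mirrors the proof of Lemma~\ref{34_lem:mcs_fast}—that every \MCSI arises from exactly one such entry, so no solution is emitted twice across distinct entries, and that restricting to $s=v$ loses nothing since any \MCSI of $G^r_u,H^s_v$ is also an \MCSI of $G^r_u,H^v_v$. Within a single entry, distinctness of the enumerated isomorphisms follows because the recursion branches along pairwise distinct \MWMs of each matching instance, so any two expansions diverge at some matching and therefore induce different vertex maps. Completeness follows symmetrically, since all relevant entries and all \MWM expansions are considered.

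For the running time of the enumeration I would analyse the cost of emitting a single isomorphism $\phi$. Expanding $\phi$ visits a collection of matching instances indexed by $i\in[1..m_\phi]$; on the $i$-th instance we enumerate its \MWMs with the perfect-matching enumeration technique of~\cite{Uno97}, whose per-matching cost is two depth-first searches in the directed auxiliary graph associated with $B'$ (cf.~Figure~\ref{34_fig:mwm}) and is hence $\mathcal{O}(k_i l_i)$, where $k_i,l_i$ are the sizes of the two vertex classes. The crucial observation—the step I expect to be the main obstacle to make fully rigorous—is that the vertex sets of these $m_\phi$ instances are pairwise disjoint, because each corresponds to a distinct mapped vertex pair along $\phi$ and the children sets partition the trees. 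This disjointness gives $\sum_i k_i\le|G|$ and $\sum_i l_i\le|H|$, whence $\sum_i k_i l_i\le\sum_i k_i\Delta(H)\le|G|\Delta(H)$ and symmetrically $\sum_i k_i l_i\le\Delta(G)|H|$, so enumerating $\phi$ costs $\mathcal{O}(\min\{|G|\Delta(H),\Delta(G)|H|\})$. Summing over all $T$ isomorphisms yields the second term $\mathcal{O}(T\min\{|G|\Delta(H),\Delta(G)|H|\})$, and adding the two contributions gives the claimed bound.
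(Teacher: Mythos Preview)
Your proposal is correct and follows essentially the same approach as the paper: split the cost into the one-time computation of the \MCSI weight via Theorem~\ref{34_th:runtime:mcsi}, plus a per-isomorphism enumeration cost bounded by $\mathcal{O}(\min\{|G|\Delta(H),\Delta(G)|H|\})$ using the disjointness of the children sets across the matching instances along a single $\phi$ and the $\mathcal{O}(k_il_i)$ DFS cost from the Uno-style \MWM enumeration. Your correctness argument (restricting to entries $D(u,v,v)$, distinctness across entries via the reasoning of Lemma~\ref{34_lem:mcs_fast}, distinctness within an entry via distinct \MWMs, completeness by exhaustiveness) also matches the paper's justification.
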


\section{Experimental Comparison}
In this section we experimentally evaluate the running time of our approach (DKM)
on synthetic and real-world instances. 
We compare our algorithm to the approach of~\cite{Schietgat2013} which also 
solves MCST when the input graphs are trees. The corrected analysis of the approach 
yields a running time of  $\mathcal{O}(n^{4})$, see Appendix,
which aligns better with our experimental findings of $\Omega(n^5)$.
The implementation was provided by the authors as part of the FOG package.\footnote{https://dtai.cs.kuleuven.be/software/PMCSFG}
Both algorithms were implemented in C\texttt{++} and compiled with GCC v.4.8.4.
Running times were measured on an Intel Core i7-3770 CPU with 16 GB of RAM
using a single core only.
We generated random trees by iteratively adding edges to a randomly chosen vertex
and averaged over 40 to 100 pairs of instances depending on their size.
The weight function $\wIso$ was set to 1 for each pair of vertices and edges, i.e., we compute isomorphisms of maximum size. This matches the setting in FOG.

Table~\ref{34_tab:results} summarizes our results and we observe that the running
time of our approach aligns with our theoretical analysis.
In comparison, FOG's running time is much higher and increases to a larger extent with the input size.
The running times of both algorithms show a low standard deviation for random trees,
cf. Tables~\ref{34_tab:results}\subref{34_tab:results:same},~\subref{34_tab:results:fixed}.
Table~\ref{34_tab:results}\subref{34_tab:results:star} shows the running time in star graphs, which are
worst-case examples for some approaches, see Sec.~\ref{34_sec:FDPF}.
Our theoretical proven cubic running  time matches the experimental results, while FOG's running time increases drastically.
Table~\ref{34_tab:results}\subref{34_tab:results:labeled} summarizes the computation time under different weight functions.   We defined $\wIso$ such that different labels are simulated, i.e., vertices and edges with different labels have weight $-\infty$, which again matches FOG's setting. Both algorithms clearly benefit from the fact that less \MWMs have to be computed. The results on random trees are also shown in Figure~\ref{34_fig:exp1}.

From a chemical database of thousands of molecules\footnote{%
NCI Open Database, GI50, \url{http://cactus.nci.nih.gov}}
we extracted 100 pairs of graphs with block-cut trees (BC-trees) consisting of more than 40 vertices. BC-trees are a representation of graphs, where each maximal biconnected component is represented by a $B$-vertex. If two such components share a vertex, the corresponding $B$-vertices are connected through a $C$-vertex representing this shared vertex.
The running time for MCST on BC-trees is an important factor for the total running time of MCS algorithms for outerplanar and series-parallel graphs like~\cite{Akutsu2013,Kriege2014a,Schietgat2013}.
The average running time of our algorithm was 11.2\,ms, compared to FOG's 481.3\,ms. The speedup factor ranges from 24 to 59, with an average of 43. This indicates that the above mentioned approaches could greatly benefit from the techniques presented in this paper.

\begin{table}[t]
\centering
\begin{subtable}[b]{0.495\textwidth}
\begin{tabular}{|c||r|r|r|}
\hline
Order&DKM&FOG&$q$\\
\hline\hline
20&$0.9\pm8\%$&$40\pm7\%$&44.1\\
\hline
40&$3.5\pm6\%$&$221\pm5\%$&62.7\\
\hline
80&$15.2\pm4\%$&$1\,286\pm5\%$&84.8\\
\hline
160&$58.9\pm3\%$&$8\,342\pm5\%$&141.7\\
\hline
320&$237.4\pm2\%$&$63\,327\pm8\%$&266.9\\
\hline
\end{tabular}
\subcaption{Random trees of the same order}
\label{34_tab:results:same}
\end{subtable}
\hfill
\begin{subtable}[b]{0.495\textwidth}
\begin{tabular}{|c||r|r|r|}
\hline
\phantom{Q}$|H|$\phantom{Q}&DKM&FOG&$q$\\
\hline\hline
20&$3.6\pm8\%$&$192\pm4\%$&53.6\\
\hline
40&$7.3\pm7\%$&$504\pm4\%$&68.7\\
\hline
80&$15.2\pm4\%$&$1\,286\pm5\%$&84.8\\
\hline
160&$30\pm9\%$&$3080\pm4\%$&103.3\\
\hline
320&$59.5\pm3\%$&$6842\pm4\%$&114.9\\
\hline
\end{tabular}
\subcaption{Random trees with $|G|=80$ fixed}
\label{34_tab:results:fixed}
\end{subtable}
\hfill
\begin{subtable}[b]{0.495\textwidth}
\begin{tabular}{|c||r|r|r|}
\hline
Order&DKM&\hspace{2.5em}FOG&\hspace{1.7em}$q$\\
\hline\hline
10&$0.1$&$18$&117.6\\
\hline
20&$1$&$489$&458.5\\
\hline
40&$8.9$&$18\,722$&2109.9\\
\hline
80&$77.5$&$929\,784$&11\,992.1\\
\hline
\end{tabular}
\subcaption{Star graphs}
\label{34_tab:results:star}
\end{subtable}
\hfill
\begin{subtable}[b]{0.495\textwidth}
\begin{tabular}{|c||r|r|r|}
\hline
\#labels&DKM&\hspace{3.1em}FOG&\hspace{.8em}$q$\\
\hline\hline
1&$15.2\pm4\%$&$1\,286\pm5\%$&84.8\\
\hline
2&$5.4\pm8\%$&$217\pm8\%$&40\\
\hline
3&$3.3\pm7\%$&$118\pm12\%$&36.1\\
\hline
4&$2.6\pm8\%$&$83\pm9\%$&31.9\\
\hline
\end{tabular}
\subcaption{Different $\wIso$-functions, order $80$}
\label{34_tab:results:labeled}
\end{subtable}
\caption{Average running time in ms $\pm$ RSD in \% and speedup factor $q:=$ FOG/DKM.}
\label{34_tab:results}
\end{table}

\begin{figure}[t]
\def\scax_34{*0.05}
\def\scay_34{*0.006}
		\begin{tikzpicture}[scale = 1.00]
			\node (xr) at (110\scax_34,0) [label=right:\hspace*{-2ex}\small{Order}] {};
			\node (xl) at (-5\scax_34,0) {}
				edge[EdgeA_34] (xr);
			\node (ytt) at (5\scax_34,450\scay_34+0.1) [] {\small{Time in ms}};
			\node (yt) at (5\scax_34,450\scay_34) [] {};
			\node (yb) at (5\scax_34,-30\scay_34) {}
				edge[EdgeA_34] (yt);
			\node () at (10\scax_34,0) [label=below:\small{10}] {};
			\node () at (20\scax_34,0) [label=below:\small{20}] {};
			\node () at (30\scax_34,0) [label=below:\small{30}] {};
			\node () at (40\scax_34,0) [label=below:\small{40}] {};
			\node () at (50\scax_34,0) [label=below:\small{50}] {};
			\node () at (60\scax_34,0) [label=below:\small{60}] {};
			\node () at (70\scax_34,0) [label=below:\small{70}] {};
			\node () at (80\scax_34,0) [label=below:\small{80}] {};
			\node () at (90\scax_34,0) [label=below:\small{90}] {};
			\node () at (100\scax_34,0) [label=below:\small{100}] {};
			\node (a) at (1\scax_34,100\scay_34) [label=left:\small{100}\hspace*{-2ex}] {};
			\node () at (9\scax_34,100\scay_34) {}
			   edge [Edge_34] (a);
			\node (a) at (1\scax_34,200\scay_34) [label=left:\small{200}\hspace*{-2ex}] {};
			\node () at (9\scax_34,200\scay_34) {}
			   edge [Edge_34] (a);
			\node (a) at (1\scax_34,300\scay_34) [label=left:\small{300}\hspace*{-2ex}] {};
			\node () at (9\scax_34,300\scay_34) {}
			   edge [Edge_34] (a);
			\node (a) at (1\scax_34,400\scay_34) [label=left:\small{400}\hspace*{-2ex}] {};
			\node () at (9\scax_34,400\scay_34) {}
			   edge [Edge_34] (a);

			\node (z) at (10\scax_34,6\scay_34)[VertexM_34] {};
			\node (a) at (20\scax_34,40\scay_34)[VertexM_34] {}
				edge [EdgeM_34] (z);
			\node (b) at (30\scax_34,109\scay_34)[VertexM_34] {}
				edge [EdgeM_34] (a);
			\node (c) at (40\scax_34,219\scay_34)[VertexM_34] {}
				edge [EdgeM_34] (b);
			\node (d) at (50\scax_34,387\scay_34)[VertexM_34] {}
				edge [EdgeM_34] (c);
				
			\node (z) at (10\scax_34,0.2\scay_34)[Vertex_34] {};
			\node (a) at (20\scax_34,0.9\scay_34)[Vertex_34] {}
				edge [Edge_34] (z);
			\node (b) at (30\scax_34,2.1\scay_34)[Vertex_34] {}
				edge [Edge_34] (a);
			\node (c) at (40\scax_34,3.5\scay_34)[Vertex_34] {}
				edge [Edge_34] (b);
			\node (d) at (50\scax_34,6.0\scay_34)[Vertex_34] {}
				edge [Edge_34] (c);
			\node (e) at (60\scax_34,8.7\scay_34)[Vertex_34] {}
				edge [Edge_34] (d);
			\node (y) at (70\scax_34,11.8\scay_34)[Vertex_34] {}
				edge [Edge_34] (e);
			\node (f) at (80\scax_34,15.2\scay_34)[Vertex_34] {}
				edge [Edge_34] (y);
			\node (x) at (90\scax_34,19.3\scay_34)[Vertex_34] {}
				edge [Edge_34] (f);
			\node (g) at (100\scax_34,23.8\scay_34)[Vertex_34] {}
				edge [Edge_34] (x);
		\end{tikzpicture}
		\caption{Average running time in ms (y-axis) for \MCSI computation on random trees of order $n$ (x-axis). Black = Our implementation (DKM). Blue = FOG implementation.}
		\label{34_fig:exp1}
	   \end{figure}

\section{Conclusions}
We have presented a novel algorithm for MCST which 
(i) considers only the subproblems required to guarantee that an optimal 
solution is found and 
(ii) solves groups of related matching instances efficiently in one pass. 
Rigorous analysis shows that the approach achieves cubic time in general trees and 
quadratic time in trees of bounded degree. Our analysis of the problem complexity
reveals that there is only little room for possible further improvements. The 
practical efficiency is documented by an experimental comparison. 

If the weight function is restricted to integers of a bounded value, scaling
approaches~\cite{Duan2012} to the corresponding matching problems become applicable.
It remains future work to improve the running time for this case.

\bibliography{lit}
\appendix
\setboolean{@twoside}{false}
\includepdf[pages={-}, offset=25mm -10mm]{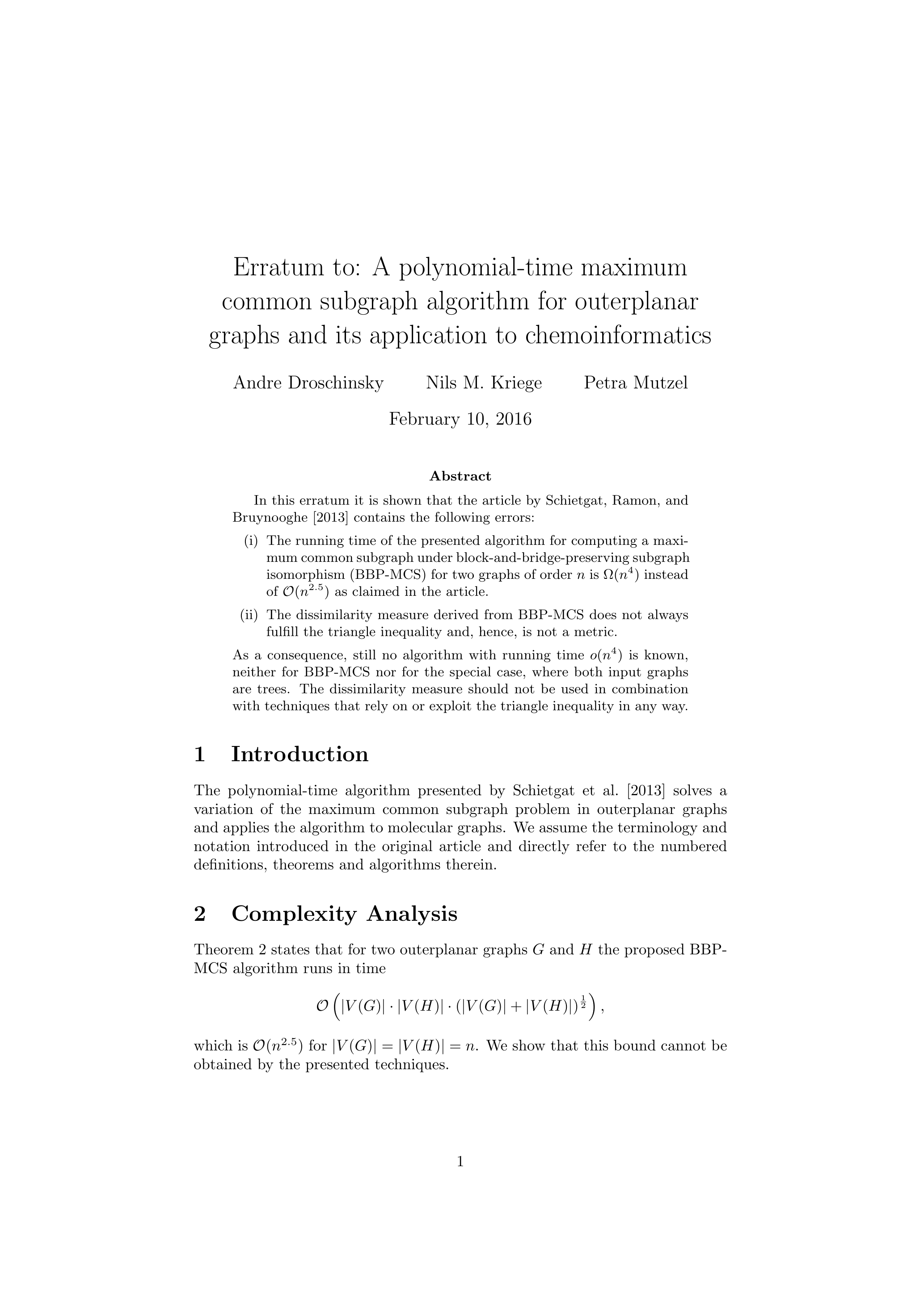}
\end{document}